\numberwithin{equation}{section} 
\numberwithin{figure}{section} 
\numberwithin{table}{section}
\newtheorem{proposition}{Proposition}[section]
\theoremstyle{remark} 
\newtheorem{remark}{Remark}[section] 
\theoremstyle{definition}
\newcommand\indep{\protect\mathpalette{\protect\independenT}{\perp}}
\def\independenT#1#2{\mathrel{\rlap{$#1#2$}\mkern2mu{#1#2}}}
\def\ppn{\vskip 6pt \noindent }
\def\R{{\mathbb{R}}}
\def\N{{\mathbb{N}}}
\newcommand{{\Xs}}{{\cal X}}
\newcommand{{\Ys}}{{\cal Y}}
\newcommand{{\Ls}}{{\cal L}}
\newcommand{{\Ss}}{{\cal S}}
\newcommand{{\Ms}}{{\cal M}}
\newcommand{{\Gs}}{{\cal G}}
\newcommand{{\As}}{{\cal A}}
\newcommand{{\Hs}}{{\cal H}}
\newcommand{{\Ns}}{{\cal N}}
\newcommand{{\Is}}{{\cal I}}
\newcommand{{\Vs}}{{\cal V}}
\newcommand{{\Ds}}{{\cal D}}
\newcommand{{\Bs}}{{\cal B}}
\newcommand{{\Cs}}{{\cal C}}
\newcommand{{\Rs}}{{\cal R}}
\newcommand{{\Fs}}{{\cal F}}
\newcommand{{\Us}}{{\cal U}}
\newcommand{{\ttheta}}{{\bm{\theta}}}
\newcommand{{\Ttheta}}{{\bm{\Theta}}}
\newcommand{{\Sss}}{{\bm{\Ss}}}
\newcommand{{\XXi}}{{\bm{\Xi}}}
\newcommand{{\pp}}{{\mathbf p}}
\newcommand{{\uu}}{{\mathbf u}}
\newcommand{{\vv}}{{\mathbf v}}
\newcommand{{\sss}}{{\mathbf s}}
\newcommand{{\ttt}}{{\mathbf t}}
\newcommand{{\FF}}{{\mathbf F}}
\newcommand{{\XX}}{{\mathbf X}}
\newcommand{{\UU}}{{\mathbf U}}
\newcommand{{\YY}}{{\mathbf Y}}
\newcommand{{\TT}}{{\mathbf T}}
\newcommand{{\VV}}{{\mathbf V}}
\newcommand{{\BB}}{{\mathbf B}}
\newcommand{{\KK}}{{\mathbf K}}
\newcommand{{\HH}}{{\mathbf H}}
\newcommand{{\II}}{{\mathbf I}}
\newcommand{{\xx}}{{\mathbf x}}
\newcommand{{\yy}}{{\mathbf y}}
\newcommand{{\bb}}{{\mathbf b}}
\newcommand{{\ab}}{{\mathbf a}}
\newcommand{{\toL}}{{\overset{\mathcal{L}}{\longrightarrow}\ }}
\newcommand{{\MC}}{{\,  *_{\text{\scalebox{0.65}{$\Ms$}}}\,  }}
\newcommand{{\dou}}{$\leadsto$\ }
\DeclareMathOperator{\Supp}{Supp}
\DeclareMathOperator{\dCor}{dCor}
\newcommand{\indic}[1]{
	\hbox{${\it 1}\hskip -4.5pt I_{\{ #1 \}}$}
}
\begin{document}
\title{The Hellinger correlation}
\author{\sc{Gery Geenens}\thanks{Corresponding author: {\tt ggeenens@unsw.edu.au}, School of Mathematics and Statistics, UNSW Australia, Sydney, tel +61 2 938 57032, fax +61 2 9385 7123 } \quad \sc{Pierre Lafaye de Micheaux}\\ School of Mathematics and Statistics, UNSW Sydney, Australia
}
\date{\today}
\maketitle
\thispagestyle{empty}

\begin{abstract}
  In this paper, the defining properties of any valid measure of the dependence between two continuous random variables are revisited and complemented with two original ones, shown to imply other usual postulates. While other popular choices are proved to violate some of these requirements, a class of dependence measures satisfying all of them is identified. One particular measure, that we call the Hellinger correlation, appears as a natural choice within that class due to both its theoretical and intuitive appeal. A simple and efficient nonparametric estimator for that quantity is proposed, with its implementation publicly available in an \texttt{R} package. Synthetic and real-data examples illustrate the descriptive ability of the measure. 
\end{abstract}

\section{Introduction} \label{sec:intro}

A large part of science is about understanding the dependence between several factors that may influence each other, for instance to disentangle genetics and environmental risk factors for individual diseases. Hence statistics, the art of turning empirical evidence into information, has always kept dependence modelling at its core. Characterising the dependence between two variables includes two main tasks: $(i)$ testing the null hypothesis $H_0$ that the two variables are independent; and $(ii)$ measuring the strength of any dependence that may exist between the two variables. 

\ppn These two tasks are often amalgamated, for instance when a dependence measure is used as the test statistic in an independence testing procedure, or when the observed value of the test statistic is interpreted as a quantification of the underlying dependence. This may seem natural, however there are good reasons to approach them separately. Indeed a measure is expected to be a fair quantification of the strength of the involved dependence. In a case of weak but non-null dependence, we would expect a reliable measure of dependence to take a value accordingly low. By contrast, independence testing aims at making a binary decision as to the presence of dependence. As such, a powerful test should be based on a statistic that takes values as large as possible (i.e., very different than under $H_0$) as soon as dependence is present, regardless of its strength. Consequently, using an interpretable dependence measure as a test statistic typically implies a loss of power for the resulting test, while a test statistic designed for guaranteeing high power for the resulting test, typically lacks a bit of finesse for accurately quantifying the strength of the dependence \citep{Reimherr13,Sun14}. This paper focuses on meaningfully measuring dependence, without explicitly giving a central role to concepts inspired by dichotomous testing procedures, such as power.

\ppn The literature on quantifying dependence has long been monopolised by the historical measures, such as Pearson's correlation, Spearman's $\rho$, Kendall's $\tau$ and Hoeffding's $D$. Yet, the interest in modern alternatives has recently made an impressive upsurge. Among others, one can cite distance correlation \citep{Szekely07}, maximal information criterion \citep{Reshef11}, 
Hilbert-Schmidt independence criterion \citep{Gretton05,Pfister18}, sign covariance \citep{Bergsma14}, G-squared \citep{Wang17} and symmetric rank covariance \citep{Weihs18}, along with a renewed enthusiasm for the mutual information \citep{Kinney14,Zeng18,Berrett19}. 

\ppn Through this abundance resurfaces the question of the criteria discriminating sensible measures from others. Far from new, it was already addressed by \cite{Renyi59}, who formulated 7 axioms that a valid dependence measure between two variables, say $X_1$ and $X_2$, should satisfy. However, the only measure known to fulfil them all is the {\it maximal correlation coefficient}, viz.\ $S(X_1,X_2) = \sup_{\psi_1,\psi_2} |\rho\left(\psi_1(X_1),\psi_2(X_2)\right)|$, where $\rho(\cdot,\cdot)$ is Pearson's correlation and the supremum is taken over all Borel functions $\psi_1,\psi_2: \R \to \R$. Yet, that measure is computable only in very particular cases, and it may return, and often does so, its maximal value $S(X_1,X_2) = 1$ even in the absence of any obvious strong dependence between $X_1$ and $X_2$ \citep{Bell62}. This evidences that some of Renyi's axioms may be unsuitable for general use. As a result, different sets of amended axioms have been proposed in the subsequent literature, see e.g., \citet{Schweizer81}, \cite{Lancaster82}, \cite{Granger04} or \citet[Section 4.3]{Balakrishnan09}. Among those propositions, five properties, labelled (P\ref{ax:exist})--(P\ref{ax:Gaussconf}) hereafter, seem difficult to dispute, while others are more prone to discussion. Below, we complement those 5 mainstream postulates with two original ones (P\ref{ax:funct})--(P\ref{ax:GDPI}), and justify at length their reason-of-being. We show that they are more fundamental than other usually posited properties, while more natural intuitively. 

\section{Renyi's axioms and beyond} \label{sec:dep}

\subsection{Dependence} \label{sec:dependence}

We call {\it dependence} between two random variables $X_1$ and $X_2$ whatever remains to be specified for entirely reconstructing the joint distribution $F_{12}$ of $(X_1,X_2)$ once their marginal distributions $F_1$ and $F_2$ are known. The strength of dependence is thus the size (in an appropriate sense) of that missing link. As such, $X_1$ and $X_2$ are as dependent as can be when $F_{12}$ is as different as can be to the independence base case $F_1F_2$. If both $X_1$ and $X_2$ are continuous, this is characterised by $F_{12}$ being singular with respect to $F_1 F_2$ \citep{Silvey64}. In the discrete case, though, such singularity is impossible. 
This illustrates why measuring dependence may be a structurally different problem in the continuous and in the discrete cases \citep{Hoeffding42}. In particular, it is known that approaches based on copulas, warranted in the continuous setting, are doomed to failure for analysing dependence between discrete variables \citep{Genest07}. This justifies to study the two situations separately; this paper considers the continuous case only. Perspectives of extension of the below discussion to the discrete setting are provided in Section \ref{sec:persp}.

\subsection{Postulates}  \label{subsec:postulates}

Let $X_1$ and $X_2$ be two continuous random variables defined on the same probability space. It is widely accepted that a valid measure $D$ of the dependence between them should be such that:
\begin{enumerate}[({P}1)]
 \item \label{ax:exist} ({\it existence}) $D(X_1,X_2)$ is defined, whatever the variables $X_1$ and $X_2$;
 \item \label{ax:sym} ({\it symmetry}) $D(X_1,X_2) = D(X_2,X_1)$;
 \item \label{ax:norm} ({\it normalisation}) $0 \leq D(X_1,X_2) \leq 1$;
 \item \label{ax:indep} ({\it characterisation of independence}) $D(X_1,X_2)=0$ $\iff$ $X_1$ and $X_2$ are independent ($X_1 \indep X_2$);
 \item \label{ax:Gaussconf} ({\it weak Gaussian conformity}) If $(X_1,X_2)$ is a bivariate Gaussian vector, then $D(X_1,X_2)$ is a strictly increasing function of $|\rho(X_1,X_2)|$.
\end{enumerate}

`Existence' (P\ref{ax:exist}) is a minimal requirement. Though, many popular measures do not satisfy it. For instance, Pearson's correlation and distance correlation ($\dCor$) are only defined if $X_1$ and $X_2$ have finite second moment. 

\ppn Defined as the void between $F_{12}$ and $F_1F_2$ (Section~\ref{sec:dependence}), the dependence in $(X_1,X_2)$ is evidently invariant to permutation of $X_1$ and $X_2$, making `symmetry' (P\ref{ax:sym}) unquestionable as well. Note that asymmetric measures, arguments in favour of which may sometimes be found in the literature, explicitly target directional notions such as `regression dependence' \citep{Dette13} or causal relationships \citep{Janzing13}, hence are of a different nature and do not measure dependence as defined hereinabove.

\ppn `Normalisation' (P\ref{ax:norm}) only aims to provide benchmarks -- any candidate measure can be made comply with it through renormalisation. Importantly, though, it implies that $D(X_1,X_2)$ is an unsigned number. Any signed measure, whose sign is meant to be informative about the `direction' of the association ($D(X_1,X_2)>0$: positive association, $X_1$ and $X_2$ tend to vary in the same direction; $D(X_1,X_2)<0$: negative association, $X_1$ and $X_2$ tend to vary in opposite direction) is meaningful only when explicitly looking for such monotonic association. Dependence, as defined in Section~\ref{sec:dependence}, is a much broader concept and cannot generally be categorised as `positive' or `negative'. For instance, among the scatter-plots shown in Figure~\ref{fig:HHGplots} below, none (but [7]) exhibits any sense of `positive' or `negative' association between $X_1$ and $X_2$ while all (but [6]) involve dependence between them. Hence a general dependence measure must be unsigned. Then it seems only natural to ask the measure to be null in the case and only in the case of no dependence (P\ref{ax:indep}). 

\ppn Finally, `weak Gaussian conformity' 
(P\ref{ax:Gaussconf}) is unavoidable. In a bivariate Gaussian vector, Pearson's correlation $\rho$ uniquely specifies the joint distribution once the marginals are fixed. Hence dependence (as defined in Section~\ref{sec:dependence}) is unequivocally characterised by Pearson's correlation, and any measure disagreeing with it in a bivariate Gaussian vector cannot be valid. 

\ppn Mostly dictated by common sense, these (P\ref{ax:exist})--(P\ref{ax:Gaussconf}) can be found under this form or slightly amended in most of the related references. Here, we complete this list with the following two original requirements which, by contrast, offer novel perspectives on what characterises valid dependence measures.

\begin{enumerate}[({P}1)]
	\setcounter{enumi}{5}
	\item \label{ax:funct} ({\it characterisation of pure dependence}) $D(X_1,X_2)=1$ if and only if there exists a Borel function $\Psi: [0,1] \to \R^2$ such that $(X_1,X_2) = \Psi(U)$ for $U \sim \Us_{[0,1]}$ and $\Cs_\Psi$, the image of $\Psi$ in $\R^2$, is such that $\iint_{\Cs_\Psi} dF_1(x_1)\, dF_2(x_2)=0$;
	\item \label{ax:GDPI} ({\it generalised Data Processing Inequality}) If $X_1$ and $X_3$ are conditionally independent given $X_2$ ($X_1 \indep X_3 \mid X_2$), then $D(X_1,X_3) \leq \min\{D(X_1,X_2),D(X_2,X_3)\}$.
\end{enumerate}

Analogously to (P\ref{ax:indep}), $D(X_1,X_2)$ should be maximum if and only if there exists some sort of perfect dependence between $X_1$ and $X_2$. Yet, a universally accepted definition of what perfect dependence is, has proved elusive. Our interpretation of it, which we refer to as {\it pure dependence} and leads to (P\ref{ax:funct}), aligns closely with \cite{Hoeffding42}'s and \cite{Silvey64}'s conception as described in Section \ref{subsec:puredep}. The rationale for (P\ref{ax:GDPI}) is detailed in Section \ref{subsec:DPI} and is shown to have wide implications. 

\subsection{Pure dependence versus predictability} \label{subsec:puredep}

One can view a vector $(X_1,X_2)$ whose components are independent as a random system with two degrees of freedom, in the sense that $X_1$ and $X_2$ are allowed to vary totally freely. By contrast, any degree of dependence between $X_1$ and $X_2$ necessarily restrains, to some extent, their free variation, reducing {\it de facto} the associated number of degrees of freedom for $(X_1,X_2)$ to strictly smaller than 2. This number can be thought of as the (possibly fractional) number of latent variables able to reproduce in principle the joint behaviour of $(X_1,X_2)$. From that perspective, the opposite of `independence' is thus when $(X_1,X_2)$ has only one degree of freedom, that is, when one single latent variable, say $U$, is able to generate the full covariation of $X_1$ and $X_2$. Formally, this means that there is a function $\Psi:[0,1]\rightarrow \R^2$ such that $(X_1,X_2)=\Psi(U)$. Although we are to see two variables $X_1$ and $X_2$, the underlying probabilistic process is fed by one single source of variability, and $X_1$ and $X_2$ are just the two sides of the same coin. This is essentially what we refer to as {\it pure dependence}. The concept is strongly related to \cite{Hoeffding42}'s `$c$-dependence', and is akin to the joint distribution $F_{12}$ being singular with respect to the product $F_1F_2$ of its margins \citep{Silvey64}, although not exactly equivalent \citep{Durante13}. 
 
\ppn Indeed, the condition $\iint_{\Cs_\Psi} dF_1(x_1)\, dF_2(x_2)=0$ in (P\ref{ax:funct}) implies that $\Cs_\Psi$ must be a `proper curve', in the specific sense that the intersection of $\Cs_\Psi$ with any line $x_1 = t$ or $x_2=t$, $t \in \R$, consists almost surely of at most finitely many points. In particular, it excludes the situations where, although seeded by one single random variable $U$, $(X_1,X_2)=\Psi(U)$ defines a couple of independent random variables. Clearly, this is the case when $\Psi$ defines a constant function of either $X_1$ or $X_2$, making $X_1$ or $X_2$ a degenerate variable hence independent of any other. It is also the case, for instance, when $\Psi$ is a {\it space-filling curve} such as the Peano or the Hilbert curve. Those are known to be surjective and continuous functions from the unit interval $\Is \doteq [0,1]$ onto the unit square $\Is^2$; hence, as $u$ varies from 0 to 1, $\Psi(u)$ visits every single point of $\Is^2$ (see Figure \ref{fig:peano} below). In addition, they have the bi-measure-preserving property \citep[Section 2.7]{Steele97}: for any Borel set $\As \subseteq \Is$, $\lambda_1(\As) = \lambda_2(\Psi(\As))$, where for $q=1,2$, $\lambda_q$ is the Lebesgue measure on $\mathbb{R}^q$. In theory, it is thus possible to generate a bivariate uniform vector $(X_1,X_2)$ on the unit square, hence $X_1 \indep X_2$, by letting $U$ run on $\Is$ and defining $(X_1,X_2) = \Psi(U)$ \citep[p.\,43]{Steele97}. Of course, by definition the image of this function is the whole $\Is^2$, and $\iint_{\Is^2} dF_1(x_1)\, dF_2(x_2)=1$ which violates our definition of pure dependence~(P\ref{ax:funct}).

\ppn Those space-filling curves are special cases of fractal constructions, and the observed independence of $X_1$ and $X_2$ is actually induced by the inherent chaos in the fractal $\Psi$. A fractal is obtained as the limit of a series of iterations reproducing a certain regular pattern at finer and finer resolution. `Shuffles of Min' constructions \citep{Kimeldorf78,Mikusinski92} are of the same nature, while 
\cite{Zhang18} constructed a similar example based on 
a `bisection expanding cross' (Figure~\ref{fig:BET} below); see also \citet[p.\,2287]{Sejdinovic13} who consider sine curves of higher and higher frequencies. Denote by $\Psi_d$ the approximation of the fractal $\Psi$ at resolution $d \in \N$, and $\Cs_d \subset \R^2$ its image. For any finite $d$, $\iint_{\Cs_d} dF_1(x_1)\, dF_2(x_2)=0$, so that defining $(X_1,X_2) = \Psi_d(U)$ would produce a couple of purely dependent variables according to (P\ref{ax:funct}). Now, in the limit $d \to \infty$, their pure dependence suddenly turns into independence, meaning that one can approach arbitrarily closely, in a certain sense, distributions showing independence by distributions showing pure dependence. This ostensible paradox is the core of the discussion in \cite{Zhang18}.

\ppn This, however, is very similar to the following simple case of a degenerate bivariate Gaussian distribution: let $X_1 \sim \Ns(0,1)$ and $X_2 = a X_1$, for $a \neq 0$. Then $|\rho(X_1,X_2)| = 1$ (pure dependence), including for $a$ arbitrarily small. Yet, when $a=0$, $\rho(X_1,X_2)=0$ (independence). 
As $a \to 0$, one would thus approach independence arbitrarily closely by pure dependence. The above described paradox is thus well understood and not much troublesome in a simpler context.

\ppn \cite{Renyi59}'s original Axiom E) requires $D$ to be maximum under `strict dependence', 
defined as when there exists a Borel function $\psi_1: \R \to \R$ such that $X_2 = \psi_1(X_1)$, or a Borel function $\psi_2:\R \to \R$ such that that $X_1 = \psi_2(X_2)$; that is Axiom 3 in \citet{Granger04} as well. In other words, one of the variables should be deterministically predictable from the other, that is, what \cite{Lancaster63} defined as `complete dependence'. Yet $X_2$ may be a deterministic function of $X_1$ while giving very little information on $X_1$; e.g., $X_2=\sin(X_1)$ ($\psi_1$ is many-to-one). Clearly asymmetric, this concept seems hardly reconcilable with (P\ref{ax:sym}). In an attempt to symmetrise it, one can request the existence of two functions $\psi_1$ and $\psi_2$ such that $X_2 = \psi_1(X_1)$ {\it and} $X_1 = \psi_2(X_2)$, that is, a one-to-one relationship between $X_1$ and $X_2$. This would reduce any sense of perfect dependence to `mutual complete dependence' \citep{Lancaster63} or even `monotone dependence' \citep{Kimeldorf78}, which appears too restrictive. All in all, if some dependence between $X_1$ and $X_2$ may usually help for predicting $X_2$ from $X_1$ or vice-versa, the concepts of predictability and dependence are indeed distinct.

\ppn Note that the characterisation of `pure dependence' in (P\ref{ax:funct}) is symmetric in $X_1$ and $X_2$, and it admits deterministic predictability as a particular case, in the sense that if $X_2 = \psi_1(X_1)$, one can write 
\[(X_1,X_2) =  (F_1^{-1}(U),\psi_1(F_1^{-1}(U))) \doteq \Psi(U), \quad \text{ for } U \sim \Us_{[0,1]}.  \]
Generally, though, it does not require any of the two variables to be predictable from the other. We note that many popular dependence measures fail to satisfy (P\ref{ax:funct}). In particular, Pearson's correlation and distance correlation take their maximum value only in the case of perfect linear relationship between $X_1$ and $X_2$, while rank-based measures such as Spearman's $\rho$, Kendall's $\tau$ or Hoeffding's $D$ are maximum only in the case of `monotone dependence' (deterministic monotonic relationship between the two variables).

\subsection{Generalised Data Processing Inequality, equitability, margin-freeness and copulas} \label{subsec:DPI}

The {\it Data Processing Inequality} is an important information-theoretic concept \citep[Section 2.8]{Cover06}. It carries the intuitively clear idea that information cannot be gained when a signal goes through a noisy channel. Specifically, if $X_1,X_2$ and $X_3$ are three random variables such that $X_1$ and $X_3$ are independent conditionally on $X_2$ ($X_1 \indep X_3 | X_2$), it establishes that $I(X_1,X_3) \leq I(X_1,X_2)$ where $I(X_i,X_j)$ is the Mutual Information between $X_i$ and $X_j$ ($i,j \in \{1,2,3\}$). Concretely, if $X_2$ is a signal containing some information about $X_1$ and we are only able to see $X_3$, a version of $X_2$ diluted in white noise, then the noisy version $X_3$ is necessarily less informative about $X_1$ than $X_2$. It seems fair to paraphrase this as `there is less dependence between $X_1$ and $X_3$ than between $X_1$ and $X_2$', making it reasonable to ask a dependence measure to satisfy the `generalised Data Processing Inequality' (P\ref{ax:GDPI}). 

\ppn The implications of (P\ref{ax:GDPI}) are actually very deep. In particular, we have the following result:
\begin{proposition} \label{prop:selfequit} A dependence measure $D$ satisfying (P\ref{ax:GDPI}) is such that
\begin{equation} D(X_1,X_2) = D(\psi_1(X_1),\psi_2(X_2)) \label{eqn:selfequit} \end{equation}
for any Borel functions $\psi_1$ and $\psi_2$ such that $X_1 \indep X_2 \ |\ \psi_1(X_1)$ and $X_1 \indep X_2 \ |\ \psi_2(X_2)$.
\end{proposition}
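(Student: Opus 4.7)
The plan is to exploit (P\ref{ax:GDPI}) in a two-sided squeeze, performing one substitution at a time: first replace $X_1$ by $\psi_1(X_1)$, then $X_2$ by $\psi_2(X_2)$, showing at each step that $D$ does not change. Each equality will come from two applications of (P\ref{ax:GDPI}), one yielding the inequality ``$\leq$'' and the other ``$\geq$''.

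Setting $Y_1 \doteq \psi_1(X_1)$, I first observe that $Y_1 \indep X_2 \mid X_1$ holds trivially, because conditional on $X_1$ the value of $Y_1$ is deterministic. Applying (P\ref{ax:GDPI}) to the triple $(Y_1, X_1, X_2)$ gives $D(Y_1, X_2) \leq D(X_1, X_2)$. The hypothesis $X_1 \indep X_2 \mid Y_1$, plugged into (P\ref{ax:GDPI}) for the triple $(X_1, Y_1, X_2)$, gives the opposite inequality. Hence $D(X_1, X_2) = D(\psi_1(X_1), X_2)$. Setting next $Y_2 \doteq \psi_2(X_2)$ and noting that the hypothesis $X_1 \indep X_2 \mid Y_2$ at once implies $Y_1 \indep X_2 \mid Y_2$ (conditional independence is preserved under taking a Borel function of one argument), the very same pair of invocations of (P\ref{ax:GDPI})---using the trivial $Y_2 \indep Y_1 \mid X_2$ for ``$\leq$'' and $Y_1 \indep X_2 \mid Y_2$ for ``$\geq$''---yields $D(Y_1, X_2) = D(Y_1, Y_2)$. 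Chaining the two equalities produces~\eqref{eqn:selfequit}.

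The entire argument is orchestrated out of four applications of (P\ref{ax:GDPI}), together with the elementary measure-theoretic fact that $A \indep B \mid C$ implies $f(A) \indep B \mid C$ for any Borel $f$. I therefore anticipate no genuine obstacle; the substance of the proposition resides in recognising that (P\ref{ax:GDPI}) by itself is strong enough to force this robust invariance of $D$ under every pair of transformations that preserves the relevant conditional-independence structure.
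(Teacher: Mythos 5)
Your proof is correct and follows essentially the same route as the paper's: a two-sided squeeze applying (P\ref{ax:GDPI}) twice for each substitution, using the degeneracy of $\psi_1(X_1)$ given $X_1$ (resp.\ $\psi_2(X_2)$ given $X_2$) for one inequality and the hypothesis for the other, together with the fact that $X_1 \indep X_2 \mid \psi_2(X_2)$ implies $\psi_1(X_1) \indep X_2 \mid \psi_2(X_2)$. No gaps; your write-up merely spells out the second step that the paper dispatches with ``follows identically.''
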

\begin{proof} $X_1 \indep X_2 \ \mid\ \psi_1(X_1) \Rightarrow D(X_1,X_2) \leq D(\psi_1(X_1),X_2)$, by (P\ref{ax:GDPI}). Given $X_1$, $\psi_1(X_1)$ is degenerated, so independent of any other random variable. Thus $\psi_1(X_1) \indep X_2 \ |\ X_1$ and $D(\psi_1(X_1),X_2) \leq D(X_1,X_2)$, again by (P\ref{ax:GDPI}). Hence $D(\psi_1(X_1),X_2) = D(X_1,X_2)$. The second part follows identically, as $X_1 \indep X_2 \ |\ \psi_2(X_2) \Rightarrow \psi_1(X_1) \indep X_2 \ |\ \psi_2(X_2)$. \end{proof}
This property is strongly related to the concept of equitability, which recently came to light for discriminating between dependence measures. In short, a dependence measure is equitable if it returns the same value to equally noisy relationships of different nature. After it was empirically outlined by \cite{Reshef11}, several formal definitions were proposed \citep{Kinney14,Reshef16,Ding17}, and (\ref{eqn:selfequit}) is actually a slight generalisation of \citet{Kinney14}'s definition of {\it self-equitability}. The conditional independence assumptions $X_1 \indep X_2 \ |\ \psi_1(X_1)$ and $X_1 \indep X_2 \ |\ \psi_2(X_2)$ essentially mean that the whole dependence between $X_1$ and $X_2$ can be captured by the functions $\psi_1$ and/or $\psi_2$, and the dependence measure should reflect that. This is the case, for instance, under the regression model $X_2 = \psi_1(X_1) + \varepsilon$, where $\varepsilon \indep X_1$. Then (\ref{eqn:selfequit}) implies that $D(X_1,X_2) = D(\psi_1(X_1),X_2)$, meaning that the value of $D$ is only driven by the signal-to-noise ratio, irrespective of the nature or shape of $\psi_1$. 

\ppn Note that the set of functions $\psi_1,\psi_2$ such that $X_1 \indep X_2 \ |\ \psi_1(X_1)$ and $X_1 \indep X_2 \ |\ \psi_2(X_2)$ is necessarily non-empty. In particular, all strictly monotonic Borel functions $\psi_1$ and $\psi_2$ are such functions, as in that case, conditioning on $\psi_1(X_1)$ or on $\psi_2(X_2)$ is equivalent to conditioning directly on $X_1$ or $X_2$, respectively. Thus, (\ref{eqn:selfequit}) implies that $D$ is invariant to monotonic transformations of $X_1$ and $X_2$,  i.e.,
\begin{equation} D(\psi_1(X_1),\psi_2(X_2)) = D(X_1,X_2) \label{eqn:invmon} \end{equation}
for any two strictly monotonic Borel functions $\psi_1,\psi_2: \R \to \R$. This has often been presented as a fundamental trait of any valid dependence measure: it is Axiom F) in \citet{Renyi59}'s original paper and Axiom 6 in \citet{Granger04}, for instance. A dependence measure satisfying (\ref{eqn:invmon}) is said margin-free, given that the marginal distributions can be arbitrarily distorted by $\psi_1$ and $\psi_2$ without affecting its value. It so appears that (P\ref{ax:GDPI}) is actually a more fundamental property than `margin-freeness', given that (P\ref{ax:GDPI}) $\Rightarrow$ (\ref{eqn:selfequit}) $\Rightarrow$ (\ref{eqn:invmon}).

\ppn Margin-freeness is typically associated to copulas. The copula $C_{12}$ of the continuous vector $(X_1,X_2)$ is the distribution of $(F_1(X_1),F_2(X_2))$ on the unit square $\Is^2$, and is known to capture all the characteristics of $F_{12}$ which are invariant to monotonic transformations of its margins \citep{Schweizer81}. Thus, any dependence measure which is explicitly copula-based, e.g., Spearman's $\rho$, Kendall's $\tau$ or Hoeffding's $D$, is margin-free in the continuous setting. Conversely, for $X_1$ and $X_2$ continuous, any margin-free dependence measure $D(X_1,X_2)$ must admit a representation involving only the copula of $(X_1,X_2)$. Many popular measures of dependence are not copula-based hence they are not margin-free. Besides the obvious example of Pearson's correlation, it is the case of the distance correlation dCor \citep{Szekely07} and the Hilbert-Schmidt independence criterion HSIC \citep{Gretton05}, among others. Problems that this creates were implicitly acknowledged by \citet[Section 4.3]{Szekely09} as they briefly mentioned basing empirical estimation of dCor on the ranks of the observations instead of on the original observations. Likewise, \cite{Poczos12} suggested a copula version of HSIC. Here it is stressed that, beyond margin-freeness, the real issue with measures not copula-based is that they violate (P\ref{ax:GDPI}), whose legitimacy seems difficult to contest.

\section{$\phi$-dependence measures} \label{sec:phidep}

\subsection{Definition and properties} \label{subsec:phidep}

A measure of the dependence in $(X_1,X_2)$ should quantify how much different is the joint distribution $F_{12}$ from the product $F_1 F_2$ of its marginals; see Section \ref{sec:dependence}. Natural candidates for this task are the $\phi$-divergences \citep{Ali66} between $F_{12}$ and $F_1F_2$, viz.
\begin{equation} \Delta_\phi(F_{12}\|F_1F_2) \doteq  \iint_{\R^2} \phi\left(\frac{dF_{12}(x_1,x_2)}{dF_1(x_1)dF_2(x_2)} \right)\,dF_1(x_1)dF_2(x_2) \label{eqn:phidiv} \end{equation}
for some convex function $\phi$ such that $\phi(1) =0$. The $\phi$-divergence family includes most of the common statistical distances between distributions \citep{Liese06}, hence (\ref{eqn:phidiv}) includes many familiar dependence measures. For instance, $\phi(t) = t \log t$ yields the Kullback-Leibler divergence between $F_{12}$ and $F_1 F_2$, which is the Mutual Information $I(X_1,X_2)$.

\ppn Provided that we allow the Radon-Nikodym derivative $dF_{12}/dF_1 dF_2$ to be infinite in case of singularity (see discussion in \citet{Silvey64} and Remark \ref{rem:sing} below), $\Delta_\phi(F_{12}\|F_1F_2)$ is always defined and `existence' (P\ref{ax:exist}) is guaranteed. `Symmetry' (P\ref{ax:sym}) is obvious from the definition. `Weak Gaussian conformity' (P\ref{ax:Gaussconf}) is what \citet{Ali65} established. `Generalised Data Processing Inequality' (P\ref{ax:GDPI}) holds by Theorem 4 of \citet{Kinney14}, which makes any measure (\ref{eqn:phidiv}) automatically margin-free. Indeed, with $X_1$ and $X_2$ both continuous, one has the copula form
\begin{equation} \Delta_\phi(F_{12}\|F_1F_2) = \iint_{\Is^2} \phi\left(\frac{dC_{12}(u_1,u_2)}{du_1du_2} \right)\,du_1\,du_2 = \iint_{\Is^2} \phi\left(c_{12}(u_1,u_2)\right)\,du_1\,du_2, \label{eqn:phidivcop} \end{equation}
through Probability Integral Transform $u_k = F_k(x_k)$, $k=1,2$, where $C_{12}$ is the copula of $(X_1,X_2)$ and $c_{12}$ its density.
\begin{remark} \label{rem:sing}
The copula density $c_{12}$ is obviously defined when $C_{12}$ is absolutely continuous. If $C_{12}$ is singular or has a singular component, then one can think of $c_{12}$ as infinite on the singularity, and defined as the limit of the densities of a sequence of absolutely continuous copulas converging to $C_{12}$ in an appropriate sense \citep[p.\,9]{Ding17}.   
\end{remark}
There remain (P\ref{ax:norm}), (P\ref{ax:indep}) and (P\ref{ax:funct}), compliance to which depends on $\phi$ as follows. 
\begin{proposition} \label{thm:phi}  Let $\phi: (0,+\infty) \to\R$ be convex with $\phi(1)=0$, and call $\varphi_0 = \lim_{t \to 0} \phi(t)$, $\varphi^*_0 = \lim_{t \to 0} t\phi(1/t)$ and $\bar{\varphi} = \varphi_0 + \varphi_0^*$. Then, $\Delta_\phi(F_{12}\|F_1F_2)$ in (\ref{eqn:phidiv}) is such that $ 0 \leq \Delta_\phi(F_{12}\|F_1F_2) \leq \bar{\varphi}$. In addition, if $\phi$ is strictly convex at $t=1$, then $\Delta_\phi(F_{12}\|F_1F_2) = 0$ if and only if $X_1 \indep X_2$ and, provided that $\bar{\varphi} <\infty$, $\Delta_\phi(F_{12}\|F_1F_2) = \bar{\varphi}$ if and only if $X_1$ and $X_2$ are purely dependent in the sense of (P\ref{ax:funct}).
\end{proposition}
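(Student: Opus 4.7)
The plan is to work throughout in the copula form (\ref{eqn:phidivcop}) and to split the analysis via the Lebesgue decomposition $C_{12} = C_{12}^a + C_{12}^s$ with respect to the Lebesgue measure $Q = du_1du_2$ on $\Is^2$. Writing $c_{12}^a$ for the density of the absolutely continuous part and $\alpha = C_{12}^s(\Is^2)$ for the singular mass, Remark \ref{rem:sing} delivers $\Delta_\phi = \int_{\Is^2}\phi(c_{12}^a)\,dQ + \varphi_0^*\,\alpha$, where $\varphi_0^* = \lim_{s\to\infty}\phi(s)/s$ plays the role of the recession function acting on the singular part. For the lower bound I would apply Jensen's inequality to $c_{12}^a$ (whose $Q$-integral is $1-\alpha$) to get $\int\phi(c_{12}^a)\,dQ \ge \phi(1-\alpha)$, and combine it with the convex-combination identity $1 = \frac{s-\alpha}{s}\cdot\frac{s(1-\alpha)}{s-\alpha} + \frac{\alpha}{s}\cdot s$ sent to $s\to\infty$ to deduce $\phi(1-\alpha) + \varphi_0^*\alpha \ge \phi(1) = 0$. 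For the upper bound, the symmetric chord $t = \frac{s-t}{s}\cdot 0 + \frac{t}{s}\cdot s$ sent to $s\to\infty$ yields the key inequality $\phi(t) \le \varphi_0 + \varphi_0^*\,t$ for every $t \ge 0$, which integrates (and absorbs the singular contribution) to $\Delta_\phi \le \varphi_0 + \varphi_0^*(1-\alpha) + \varphi_0^*\alpha = \bar\varphi$.

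The equality characterisations then follow by tracking where strict convexity at $t = 1$ rules out slack. For $\Delta_\phi = 0$, any $\alpha > 0$ combined with strict convexity at $1$ makes the lower-bound chord inequality strict, so $\alpha = 0$; then equality in Jensen forces $c_{12}^a$ to be $Q$-essentially constant and hence identically $1$, i.e.\ $F_{12} = F_1 F_2$, yielding $X_1 \indep X_2$. For $\Delta_\phi = \bar\varphi$ with $\bar\varphi < \infty$, equality in the upper-bound chord imposes $\phi(c_{12}^a) = \varphi_0 + \varphi_0^*\,c_{12}^a$ $Q$-a.e.; if $c_{12}^a$ took some value $t^\star \in (0,\infty)$ on a set of positive $Q$-measure, convexity together with $\phi(0^+) = \varphi_0$ would force $\phi$ to coincide with the affine map $\varphi_0 + \varphi_0^*\,t$ on $[0,t^\star]$, which strict convexity at $1$ contradicts either directly (if $t^\star \ge 1$) or by extending the affine segment across $1$ using $\phi(1) = 0 \ne \bar\varphi$. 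Hence $c_{12}^a = 0$ $Q$-a.e., $\alpha = 1$, and $C_{12}$ is purely singular with respect to $Q$.

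It remains to match this copula singularity with pure dependence in the sense of (P\ref{ax:funct}). The PIT $U_k = F_k(X_k)$ turns any Borel $\Cs_\Psi \subset \R^2$ satisfying $\iint_{\Cs_\Psi} dF_1 dF_2 = 0$ into a subset of $\Is^2$ of $Q$-measure zero supporting $C_{12}$, giving the ``if'' direction. Conversely, given a Borel support $A \subset \Is^2$ of $C_{12}$ with $Q(A) = 0$, the Borel-isomorphism theorem for atomless Polish probability spaces supplies a Borel map $[0,1] \to A$ pushing Lebesgue measure to $C_{12}|_A$, from which the required $\Psi$ is obtained by composing with the coordinate quantiles $(F_1^{-1}, F_2^{-1})$. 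The main obstacle I foresee is the ``only if'' half of the $\bar\varphi$ characterisation: turning a single chord contact into affine behaviour of $\phi$ across $t = 1$ using only \emph{local} strict convexity at $1$ requires a careful convexity-theoretic argument rather than a one-line Jensen equality.
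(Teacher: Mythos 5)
Your route is necessarily different from the paper's: the paper disposes of Proposition \ref{thm:phi} by a one-line appeal to Theorem 5 of Liese and Vajda (2006), whereas you reconstruct the result from scratch. Your framework is the right one -- Lebesgue decomposition of $C_{12}$ with the recession slope $\varphi_0^*=\lim_{s\to\infty}\phi(s)/s$ weighting the singular mass $\alpha$ is exactly what underlies the cited theorem -- and your treatment of the two bounds, of the equality case $\Delta_\phi=\bar{\varphi}$ (an interior contact of $\phi$ with the line $\varphi_0+\varphi_0^*t$ forces $\phi$ to be affine, contradicting strict convexity at $1$), and of the bridge between singularity of $C_{12}$ and (P\ref{ax:funct}) via Borel isomorphism is sound in outline; that last step is one the paper leaves entirely implicit, so making it explicit is a genuine addition, modulo the standard caveat that the image of a Borel map is analytic rather than Borel, hence only universally measurable (it suffices that it is contained in a Borel set of $F_1\otimes F_2$-measure zero).

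There is, however, a genuine gap in your characterisation of $\Delta_\phi=0$. First, you rule out $\alpha>0$ by asserting that strict convexity at $1$ makes the limiting chord inequality strict; but that inequality is obtained as a limit $s\to\infty$, and limits of strict inequalities need not be strict, so nothing has been proved at this point. Second, and more seriously, ``equality in Jensen forces $c^a_{12}$ to be $Q$-essentially constant'' is false under the stated hypothesis, since $\phi$ is only strictly convex \emph{at} $t=1$: for $\phi(t)=(t-1)^+$, which is strictly convex at $1$ in the chord sense, Jensen's inequality is an equality for every density with $c^a_{12}\le 1$ a.e., constant or not; the conclusion $c^a_{12}\equiv 1$ can only come from combining the contact condition with the constraint $\iint c^a_{12}\,dQ=1$, which your argument never invokes there. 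The clean repair handles both issues at once: take a subgradient $k$ of $\phi$ at $1$, note $\varphi_0^*\ge k$, and use $\iint c^a_{12}\,dQ=1-\alpha$ to write $\Delta_\phi=\iint\bigl[\phi(c^a_{12})-k(c^a_{12}-1)\bigr]dQ+(\varphi_0^*-k)\,\alpha$, a sum of two nonnegative terms. If $\Delta_\phi=0$, then $c^a_{12}$ lies $Q$-a.e.\ in the contact set $Z=\{t>0:\phi(t)=k(t-1)\}$, an interval containing $1$ which, by strict convexity at $1$, cannot contain points on both sides of $1$; and $(\varphi_0^*-k)\alpha=0$. If $\alpha>0$ then $\varphi_0^*=k$, which by monotonicity of the right derivative forces $\phi(t)=k(t-1)$ on $[1,\infty)$, hence $Z\supseteq[1,\infty)$ and $c^a_{12}\ge 1$ a.e., contradicting $\iint c^a_{12}\,dQ=1-\alpha<1$. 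So $\alpha=0$, and then $c^a_{12}\in Z$ a.e.\ with $Z$ on one side of $1$ together with $\iint c^a_{12}\,dQ=1$ give $c^a_{12}=1$ a.e., i.e.\ $X_1\indep X_2$. With this substitution (and the analogous one-sided care you already flagged for the $\bar{\varphi}$ case), your proof goes through.
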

\begin{proof} This follows from Theorem 5 in \cite{Liese06}. \end{proof}

\ppn  `Characterisation of independence' (P\ref{ax:indep}) is thus granted as soon as $\phi$ is strictly convex at $t=1$. If $\bar{\varphi} < \infty$, then `normalisation' (P\ref{ax:norm}) is achieved by obvious linear rescaling $\Delta^\star_\phi(F_{12}\|F_1F_2) \doteq (1/\bar{\varphi}) \Delta_\phi(F_{12}\|F_1F_2)$, and `characterisation of pure dependence' (P\ref{ax:funct}) for $\Delta^\star_\phi(F_{12}\|F_1F_2)$ follows straight from Proposition~\ref{thm:phi}. On the other hand, if $\bar{\varphi} = \infty$, then, even though one can still enforce (P\ref{ax:norm}) through a non-linear transformation, (P\ref{ax:funct}) cannot be made true: there are cases in which $\Delta_\phi(F_{12}\|F_1F_2)$ is infinite while $X_1$ and $X_2$ do not show any sense of strong dependence (see Section \ref{sec:comphi}). In short, a $\phi$-dependence measure fails to comply with (P\ref{ax:funct}) when its baseline version (\ref{eqn:phidiv}) can be infinite -- see comments in \cite{Micheas06} (in particular, their Axiom 9) -- which has 
also serious implications when empirically estimating such a measure \citep[Section 3.1]{Ding17}.

\subsection{Common choices} \label{sec:comphi}

If one takes $\phi(t) = (t-1)^2$, for which choice $\bar{\varphi}=\infty$, $\Delta_\phi(F_{12}\|F_1F_2)$ is Pearson's Mean Square Contingency coefficient $\Phi^2(X_1,X_2)$, allowed to be infinite. It is usually re-normalised as $\Phi^\star(X_1,X_2) \doteq \Phi(X_1,X_2)/(1+\Phi^2(X_1,X_2))^{1/2} \in [0,1]$ so as to agree with $|\rho|$ in the Gaussian case \citep{Renyi59}. However, in general this may equal 1 even when $X_1$ and $X_2$ do not show any sense of strong relationship. Indeed, in the form (\ref{eqn:phidivcop}), $\Phi^2(X_1,X_2)=\iint_{\Is^2} (c_{12}(u_1,u_2)-1)^2\,du_1\,du_2$, and it is known that the copula density $c_{12}$ is not square-integrable as soon as there is the slightest level of tail dependence between $X_1$ and $X_2$ \citep[Theorem 3.3]{Beare10}. This observation was already used by \citet[p.\,150]{Hoeffding42} for calling into question the usefulness of $\Phi^2$ as a measure of dependence.

\ppn Likewise, for $\phi(t) = t \log t$, one gets $\bar{\varphi} = \infty$, and indeed the Mutual Information $I(X_1,X_2)$ can be infinite. It can be re-normalised as $I^\star(X_1,X_2) \doteq \left[1-\exp(-2I(X_1,X_2))\right]^{1/2} \in [0,1]$ so as to agree again with $|\rho|$ in the Gaussian case \citep{Linfoot57}. Yet, it cannot unequivocally characterise pure dependence. For instance, for some $a \in (0,1)$ and $m \in \N$, consider the copula density 
\[c(u,v) = \left\{\begin{array}{l l} \frac{1}{1-a} & \text{ if } (u,v) \in [0,1-a)^2  \\ 
\frac{m}{a} & \text{ if } (u,v) \in [1-a+(\nu-1)a/m,1-a+\nu a/m)^2,\ \nu=1,2,\ldots,m, \\
0 & \text{ elsewhere. } \end{array} \right. \]
Then it can be checked that 
\[I(X_1,X_2)=\iint_{\Is^2} c(u_1,u_2) \log c(u_1,u_2)\,du_1 \,du_2 = -(1-a) \log(1-a) - a \log a + a \log m. \]
As $a \to 0$, the copula density is constant over (almost) the whole of $\Is^2$, suggesting a situation of near-independence. However, $I(X_1,X_2)$ can be made arbitrarily large by growing $m$ such that $a  \log m \to \infty$. 

\ppn By contrast, the choice $\phi(t) = |t-1|$ yields $\bar{\varphi} =2 < \infty$. The associated (rescaled) dependence measure $\Delta^\star_\phi(F_{12}\|F_1F_2) = (1/2) \Delta_\phi(F_{12}\|F_1F_2)$ consequently satisfies (P\ref{ax:exist})--(P\ref{ax:GDPI}). It is actually \cite{Silvey64}'s $\Delta$, renamed `robust copula dependence' measure in \cite{Ding17} when in copula form (\ref{eqn:phidivcop}). However, no root-$n$ consistent estimator of that measure seems available. In the next section, we explore a particular measure which lies at the intersection  between two common families of $\phi$-divergences (see Appendix \ref{App:phidiv}), and we provide a simple root-$n$ consistent estimator.

\section{The Hellinger correlation}

If one takes $\phi(t) = (t^{1/2}-1)^2$, for which $\bar{\varphi} = 2 < \infty$, the corresponding (rescaled) measure 
\begin{align}
 \Delta^\star_\phi(F_{12}\|F_1F_2) & = \frac{1}{2} \iint_{\R^2}\left(\sqrt{\frac{dF_{12}(x_1,x_2)}{dF_1(x_1)dF_2(x_2)}} -1\right)^{2}\,dF_1(x_1)dF_2(x_2) \notag \\
 = \frac{1}{2} \iint_{\R^2} & \left(\sqrt{dF_{12}(x_1,x_2)} -\sqrt{dF_1(x_1)dF_2(x_2)}\right)^{2} \doteq \Hs^2(X_1,X_2)\label{eqn:Hellfull}
\end{align}
satisfies (P\ref{ax:exist})--(P\ref{ax:GDPI}). We denote this measure $\Hs^2(X_1,X_2)$, or simply $\Hs^2$, as it is the squared Hellinger distance between $F_{12}$ and $F_1F_2$. 
Under the copula form (\ref{eqn:phidivcop}), it is
\begin{equation} \Hs^2   = \frac{1}{2}\iint_{\Is^2} \left(\sqrt{c_{12}}(u_1,u_2)-1\right)^2\,du_1\,du_2  = 1- \iint_{\Is^2} \sqrt{c_{12}}(u_1,u_2)\,du_1\,du_2 \doteq 1-\Bs, \label{eqn:Helldist2contcop} \end{equation}
where $\Bs$ is the {\it Bhattacharyya affinity coefficient} \citep{Bhattacharyya43} between the copula density $c_{12}$ and the independence copula density identically equal to $1$ on $\Is^2$.

\ppn In the bivariate Gaussian case, it can be checked that $\Hs^2 = 1-(2(1-\rho^2)^{1/4})/(4-\rho^2)^{1/2} \doteq  h(\rho)$, a strictly increasing function of the correlation $\rho$ -- in agreement with `weak Gaussian conformity' (P\ref{ax:Gaussconf}). This suggests to consider the measure on the transformed scale $\eta = h^{-1}(\Hs^2)$. As $h^{-1}$ is a bijection from $[0,1]$ to $[0,1]$, it preserves all (P\ref{ax:exist})--(P\ref{ax:GDPI}) for $\eta$. Direct algebra yields
\begin{equation} \eta = (2/\Bs^2)\left\{\Bs^4 +(4-3\Bs^4)^{1/2}-2\right\}^{1/2}, \label{eqn:Hellcorr}\end{equation}
our proposed measure of dependence. We call $\eta \doteq \eta(X_1,X_2)$ the {\it Hellinger correlation} between $X_1$ and $X_2$, given that it is defined so that $\eta = |\rho|$ in the bivariate Gaussian case (`strong Gaussian conformity'). This greatly facilitates interpretation as the value of $\eta$ can easily be appreciated on the familiar Pearson's correlation scale: a Hellinger correlation $\eta$ equal to $\eta_0\in[0,1]$ represents a dependence of the same strength as in a bivariate Gaussian vector whose Pearson's correlation is $\rho=\eta_0$.

\section{Empirical estimation and significance} \label{sec:est} 

\subsection{Background}

Measuring dependence by $\Hs^2$ has been considered before, e.g., by \cite{Granger04} 
who proposed an estimator based on kernel density estimation and numerical integration -- see also \cite{Su08}. However, the obtained estimate heavily depends on the bandwidths used in the kernel estimators \citep{Skaug96}, an appropriate choice of which in practice remaining problematic. \citet{Rosenblatt75}, \citet{Hong05} and \citet{Ding17} face the same issue when estimating their proposed measure; see also comments in \cite{Zeng18}. Of course, difficulty in estimating a measure from empirical data seriously limits its practical reach.

\ppn  Now, if knowledge of $dF_{12}$, $dF_1$ and $dF_2$ in (\ref{eqn:Hellfull}), or of the copula density $c_{12}$ in (\ref{eqn:Helldist2contcop}), implies knowledge of $\Hs^2$, the contrary is not true: one cannot recover $dF_{12}$, $dF_1$ and $dF_2$ or $c_{12}$ from $\Hs^2$ alone. Empirical estimation of $\Hs^2$ (or any function thereof, such as $\Bs$ or $\eta$) should thus not be based on the more difficult task of estimating the individual densities. Indeed it has been known, at least since \cite{Kozachenko87}, that one can consistently estimate some $\phi$-divergences without consistently estimating the underlying distributions; \cite{Berrett18} offer a recent review of such ideas for entropy estimation. 
Below we suggest a simple estimator of $\Bs = \iint \sqrt{c}_{12}$ along that line, subsequently producing an estimator of $\eta$ through (\ref{eqn:Hellcorr}).

\subsection{Basic estimator and asymptotic properties}

Let $\{\XX_1,\ldots,\XX_n\}$, where $\XX_i = (X_{i1},X_{i2}) \in \R^2$, be a random sample from $F_{12}$. For $i \in \{1,\ldots,n\}$, call  $\UU_i = (U_{i1},U_{i2}) = (F_1(X_{i1}),F_2(X_{i2}))$, the corresponding observations from the copula $C_{12}$, and $\widehat{\UU}_i = (\hat{U}_{i1},\hat{U}_{i2}) = (\hat{F}_{n,1}(X_{i1}),\hat{F}_{n,2}(X_{i2}))$ the `pseudo'-observations obtained from $\hat{F}_{n,k}(x) \doteq (1/(n+1)) \sum_{i=1}^n \indic{X_{ik}\leq x}$, $k=1,2$, as is customary in the copula literature. Let $R_i = \min_{j \neq i} \|\UU_j-\UU_i\|_2$, the Euclidean distance between $\UU_i$ and its closest neighbour, and its `pseudo'-version $\hat{R}_i = \min_{j \neq i} \|\widehat{\UU}_j-\widehat{\UU}_i\|_2$. 

\ppn Our first simple and entirely data-based estimator for $\Bs$, not involving any user-defined parameter, is
\begin{equation} \widehat{\Bs} = \frac{2\sqrt{n-1}}{n}\sum_{i=1}^n \hat{R}_i \label{eqn:B12hat}, \end{equation}
which is the feasible version of the oracle estimator $\widetilde{\Bs} \doteq (2\sqrt{n-1}/n) \sum_{i=1}^n R_i$. See \citet[Remark 3.1]{Leonenko08} for the motivation behind this estimator.   
In terms of measuring dependence, the intuition is the following. In case of pure dependence ($\Bs = 0$), the $\UU_i$'s fall exactly on a curve. The $R_i$'s are then essentially univariate spacings, known to be of order $O_P(n^{-1})$ \citep{Pyke65}. This yields $\widetilde{\Bs} \to 0$ in probability. Now gradually relaxing dependence amounts to allowing some play around that curve for the $\UU_i$'s. As these get more and more room to move apart, the $R_i$'s globally increase, and so does $\widetilde{\Bs}$. Ultimately, when the $\UU_i$'s get totally free (independence, $\Bs = 1$), they uniformly cover $\Is^2$ and maximally occupy their allowed space. The $R_i$'s are globally as large as can be, and so is $\widetilde{\Bs}$. The latitude given to the $\UU_i$'s for covering $\Is^2$ reflects the number of degrees of freedom in $(X_1,X_2)$; see Section~\ref{subsec:puredep}.

\ppn The $L_2$-consistency of $\widetilde{\Bs}$ follows straight from \citet[Section 3.1]{Leonenko08}. \citet[Lemma 1]{Aya18} and \citet[Theorems 1 and 2]{Ebner18} further establish the root-$n$ consistency and the asymptotic normality of $\widetilde{\Bs}$, if the copula density $c_{12}$ is bounded on $\Is^2$. In Appendix \ref{App:margtrans} we show how to relax this restrictive assumption by combining marginal transformations \citep{Geenens17} and results of \cite{Singh16}.  \citet[Lemmas 2.1 and 2.2]{Deheuvels09} help to establish that $\widehat{\Bs}-\widetilde{\Bs} = O_P(n^{-1/2})$, meaning that the above root-$n$ consistency of the oracle estimator $\widetilde{\Bs}$ carries over to the feasible $\widehat{\Bs}$.

\subsection{Normalisation} \label{subsec:norm}

In finite samples, however, $\widehat{\Bs}$ suffers from two serious defects (as would $\widetilde{\Bs}$). First, it is heavily biased due to the boundedness of the support $\Is^2$ of $C_{12}$ \citep{Liitiainen10}. Second, although $\Bs = \iint \sqrt{c}_{12} \leq \iint (\sqrt{c}_{12})^2 = 1$ by Cauchy-Schwarz, it may happen that $\widehat{\Bs} > 1$, implying a meaningless negative estimate of $\Hs^2$ and precluding ulterior estimation of $\eta$. 

\ppn Now, evidently $\sqrt{c}_{12} \in L_2(\Is^2)$. Hence from any orthonormal basis of $L_2(\Is)$, say $\{b_0(u),b_1(u),\ldots\}$ with $b_0(u) \equiv 1$, one can form a tensorised orthonormal basis for $L_2(\Is^2)$ and write the expansion $\sqrt{c}_{12}(u_1,u_2) = \sum_{k=0}^\infty \sum_{\ell=0}^\infty \beta_{k\ell} b_k(u_1)b_\ell(u_2)$, where $\beta_{k\ell} = \iint_{\Is^2} \sqrt{c}_{12}(u_1,u_2) b_k(u_1)b_\ell(u_2)\,du_1 du_2$. In particular, see that $\Bs = \beta_{00}$. Similarly to (\ref{eqn:B12hat}), 
\begin{equation} \hat{\beta}_{k\ell} = \frac{2\sqrt{n-1}}{n}\sum_{i=1}^n \hat{R}_i b_k(\hat{U}_{i1})b_\ell(\hat{U}_{i2}), \qquad k,\ell\in\mathbb{N},\label{eqn:tildebetakl} \end{equation}
is a root-$n$ consistent estimator of $\beta_{k\ell}$ \citep[Lemma 1]{Aya18}. For appropriate cut-off values $K$ and $L$, define $\widehat{\sqrt{c}}_{12}(u_1,u_2)= \sum_{k=0}^K \sum_{\ell=0}^L \hat{\beta}_{k\ell} b_k(u_1)b_\ell(u_2)$ an orthogonal series estimator for $\sqrt{c}_{12}$, and see that $\iint_{\Is^2} \left(\widehat{\sqrt{c}}_{12}\right)^2(u_1,u_2) \,du_1 du_2 = \sum_{k=0}^K \sum_{\ell=0}^L \hat{\beta}^2_{k\ell}$. This suggests to re-normalise $\widehat{\Bs} = \hat{\beta}_{00}$ as
\begin{equation} \widehat{\Bs}_{\text{\scriptsize $K$\!$L$}} =  \widehat{\Bs}\Big/\left(\sum_{k=0}^K \sum_{\ell=0}^L \hat{\beta}^2_{k\ell}\right)^{1/2}. \label{eqn:Btildestar}\end{equation}
Not only this shrinkage guarantees $\widehat{\Bs}_{\text{\scriptsize $K$\!$L$}} \in [0,1]$ always, it reduces the variance and mostly takes care of the boundary effect as well. Intuitively, when $\UU_i$ is close to the boundary of $\Is^2$, its neighbourhood is empty of data on one side and $\hat{R}_i$ tends to be larger than what it should be. This makes (\ref{eqn:B12hat}) overestimate $\Bs$ and induces bias. But then (\ref{eqn:tildebetakl}) overestimates each $\beta_{k\ell}$ to the same extent, and the induced biases mostly cancel each other out in the ratio (\ref{eqn:Btildestar}). Finally, (\ref{eqn:Btildestar}) is plugged into (\ref{eqn:Hellcorr}) to define the estimator
\begin{equation} \widehat{\eta}_{\text{\scriptsize $K$\!$L$}}^{} = (2/\widehat{\Bs}^{2}_{\text{\scriptsize $K$\!$L$}})\left\{\widehat{\Bs}^{4}_{\text{\scriptsize $K$\!$L$}} +\left(4-3\widehat{\Bs}^{4}_{\text{\scriptsize $K$\!$L$}}\right)^{1/2}-2\right\}^{1/2}, \quad K,L\in\N. \label{eqn:empHellcorr} \end{equation}

\ppn The following simulation showcases the performance of this estimator. Random samples of size $n=500$ were generated from a bivariate Gaussian distribution with correlation (a) $\rho=0.4$ and (b) $\rho=0.8$. On each of them, the basic estimator $\hat{\eta}_0$, obtained by plugging $\widehat{\Bs}$ in \eqref{eqn:Hellcorr}, was computed, together with normalised versions $\hat{\eta}_{11}$, $\hat{\eta}_{22}$ and $\hat{\eta}_{33}$. The basis $\{b_0(u),b_1(u),\ldots\}$ was formed by the normed Legendre polynomials shifted to $[0,1]$. The returned estimates are shown by the boxplots in Figure \ref{fig:BP}. In the case $\rho=0.4$, though, around 56\% of the initially returned estimates $\widehat{\Bs}$ were found greater than 1, precluding estimation of $\eta$. So the boxplot at the extreme left only represents the 44\% of the estimates which could be computed.

\begin{figure}[H]
\centering
\includegraphics[width=0.8\textwidth]{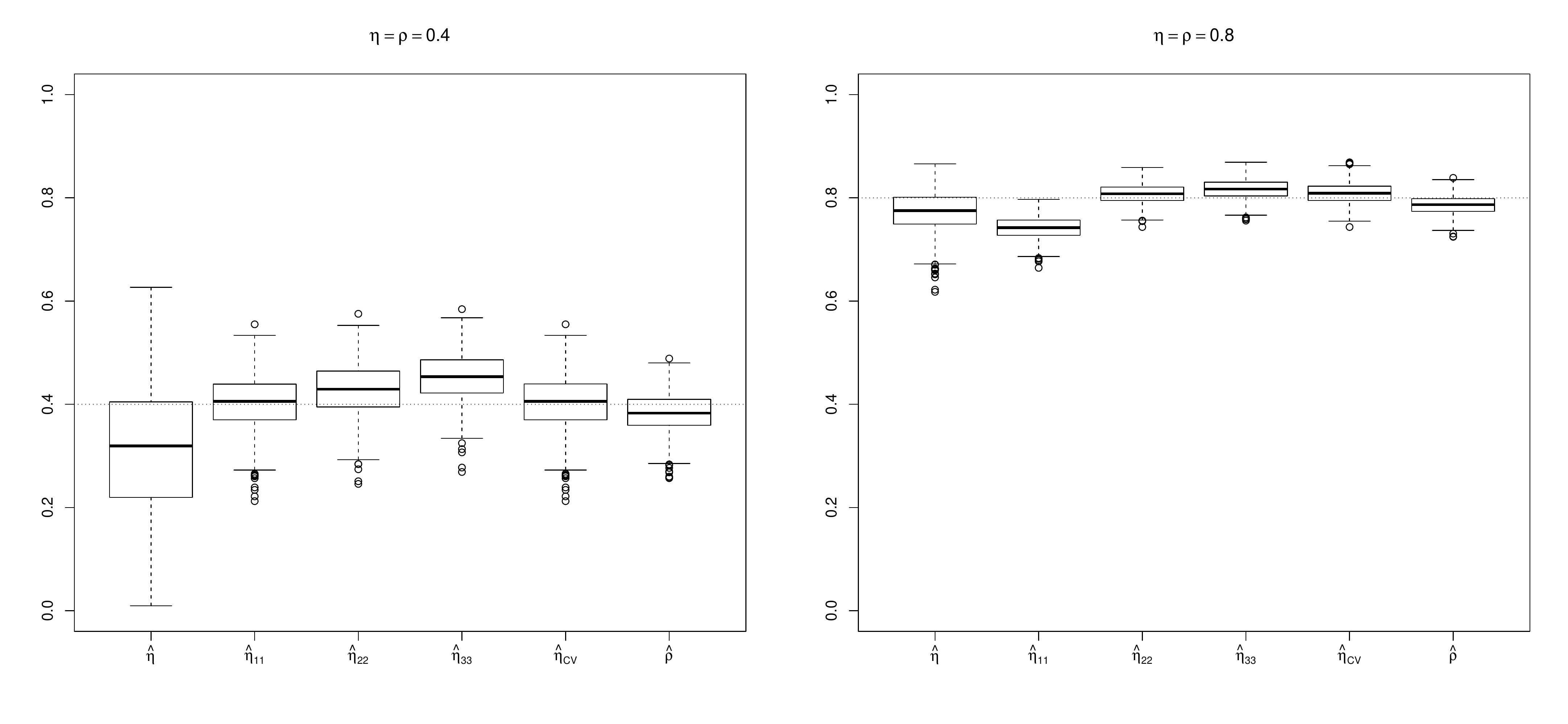}
\caption{1000 replications of 6 estimators of $\eta$ on bivariate Gaussian samples of size $n=500$ with $\rho = 0.4$ (left) and $\rho=0.8$ (right).}
\label{fig:BP}
\end{figure}

\ppn The reduction in both bias and variance brought by the normalisation is obvious. For $\rho=0.4$, one extra term in each direction ($K=L=1$) in the expansion for $\sqrt{c}_{12}$ is enough, as $\sqrt{c}_{12}$ is rather flat and well approximated by a low-degree polynomial. For $\rho=0.8$, slightly more terms should be included as $\sqrt{c}_{12}$ tends to peak in the corners of $\Is^2$. In order to keep the estimator totally data-driven, we suggest in Appendix \ref{App:CV} a novel and explicit cross-validation criterion easy to minimise. The estimator computed with the values of $K$ and $L$ minimising that criterion is marked $\hat{\eta}_\text{CV}$ in Figure \ref{fig:BP}. The proposed cross-validation criterion consistently identifies the suitable level of approximation $K$ and $L$ and produces a reliable estimator of $\eta$. Estimators (\ref{eqn:tildebetakl})-(\ref{eqn:Btildestar})-(\ref{eqn:empHellcorr}), as well as the suggested cross-validation procedure, have been efficiently implemented in the freely available R package \texttt{HellCor}.

\ppn For comparison, the empirical Pearson's correlation $\hat{\rho}$ was computed on the same samples, being here some sort of `gold standard' given that in the considered bivariate Gaussian vectors, $\eta = \rho$. It is seen that $\hat{\eta}_\text{CV}^{}$ is less biased than $\hat{\rho}$, and even does better in terms of Mean Squared Error for $\rho=0.8$ (Table \ref{tab:MSE}). Strikingly, the proposed estimator of $\eta$ is on par with the classical estimator specifically designed for capturing the dependence of linear nature peculiar to bivariate Gaussian vectors.

\begin{table}[H]
\centering
\begin{tabular}{|r| c c | c c |}
\hline
 & bias$(\hat{\eta}_\text{CV})$ & MSE$(\hat{\eta}_\text{CV})$ & bias$(\hat{\rho})$ & MSE$(\hat{\rho})$  \\
\hline
$\rho = 0.4$ & 0.003 & 0.0023 & -0.017 & 0.0017 \\
$\rho= 0.8$ & 0.009 & 0.00051 & -0.013 & 0.00053 \\
\hline
\end{tabular}
\caption{Bias and Mean Squared Error of $\hat{\eta}_\text{CV}$ and $\hat{\rho}$ (empirical correlation) in bivariate Gaussian vectors with correlations $\rho=0.4$ and $\rho=0.8$ ($n=500$).}
\label{tab:MSE}
\end{table}

\ppn Naturally, the Hellinger correlation would capture dependence of any other nature as well. This is illustrated in Figure \ref{fig:HHGplots}, showing 15 random samples of size $n=500$ generated from the 15 scenarii considered in \citet[Figure 4]{Heller16}. The estimator $\hat{\eta}_{\text{CV}}$ with expansion in the Legendre basis and cross-validation cutt-offs, from now on denoted simply $\hat{\eta}$, was computed on each of them. The estimated values of $\eta$ are shown on top of each plot, attesting an interesting descriptive ability. In order to get an idea of the accuracy of estimation, we have replicated $M=1,000$ times each of these 15 scenarii (with $n=500$) and estimated $\eta$ on each of them. Table \ref{tab:etahat} shows the main characteristics of the sampling distribution of $\hat{\eta}$ for each case. The estimation is remarkably accurate, at the exception of when the dependence is harder to detect like in scenarii [14] and [15].

\begin{figure}[H]
	\centering
	\includegraphics[width=0.99\textwidth]{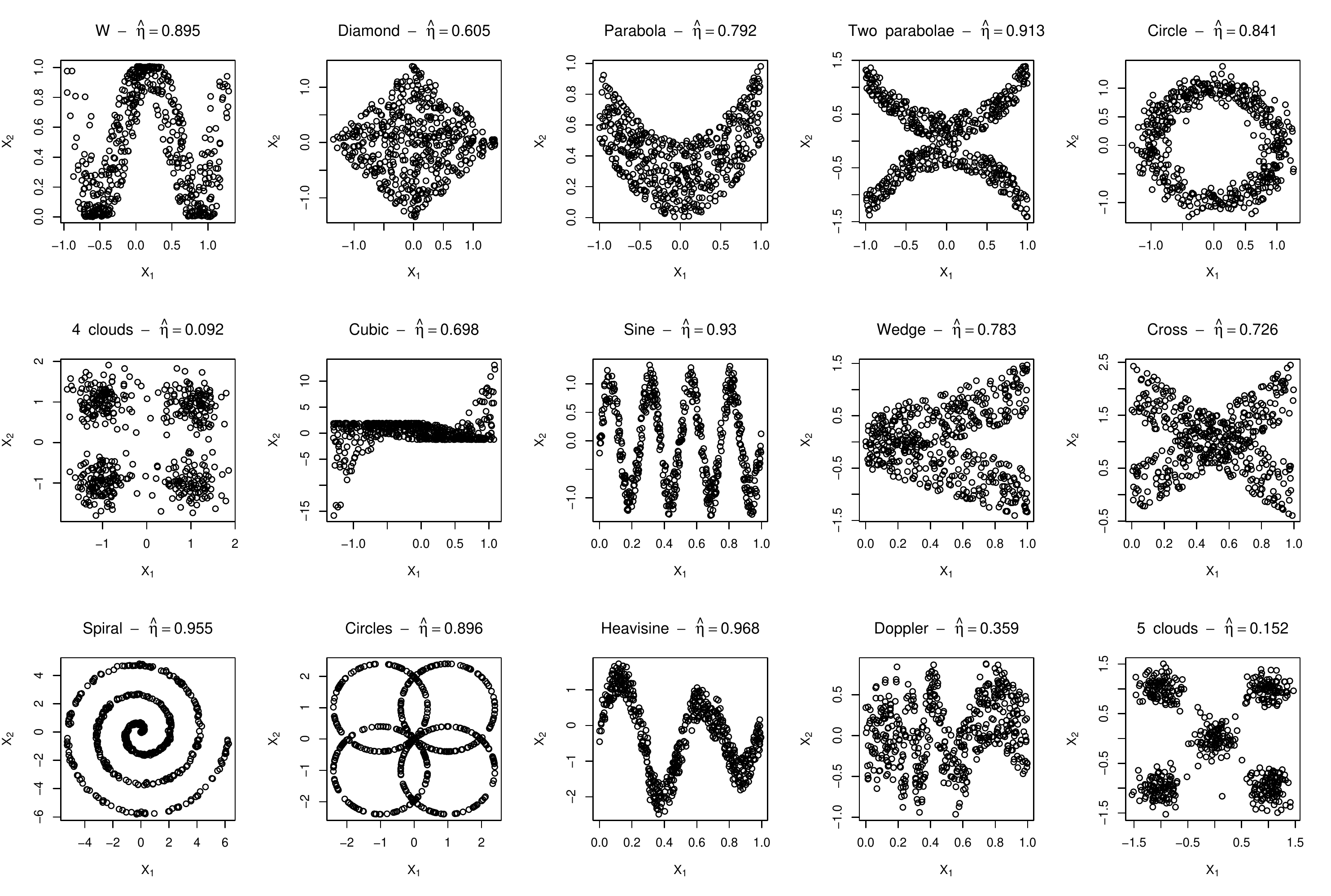}
	\caption{Fifteen random samples of size $n=500$ generated from the 15 scenarii of \citet[Figure 4]{Heller16}. The empirical Hellinger correlations are shown on top.}
	\label{fig:HHGplots}
\end{figure}

\begin{table}[H]
 \centering
\begin{tabular}{|r l | c c c c c|}
\hline
 & & \text{mean} & \text{st.dev.} & \text{min} & \text{median} & \text{max} \\
\hline
\text{[1]} &\text{ W} &0.894&0.007&0.852&0.894&0.924 \\
\text{[2]} &\text{ Diamond} &0.599&0.022&0.526&0.599&0.659 \\
\text{[3]} &\text{ Parabola}&0.798&0.018&0.742&0.802&0.839\\
\text{[4]} &\text{ Two Parabolae}&0.912&0.008&0.890&0.911&0.936\\
\text{[5]} &\text{ Circle}&0.839&0.019&0.778&0.844&0.881\\
\text{[6]} &\text{ 4 clouds}&0.080&0.034&0.010&0.076&0.219\\
\text{[7]} &\text{ Cubic}&0.746&0.028&0.596&0.747&0.825\\
\text{[8]} &\text{ Sine}&0.920&0.011&0.897&0.923&0.940\\
\text{[9]} &\text{ Wedge}&0.755&0.024&0.673&0.754&0.812\\
\text{[10]} &\text{ Cross}&0.734&0.018&0.682&0.736&0.785\\
\text{[11]} &\text{ Spiral}&0.957 & 0.005 & 0.927 & 0.957 & 0.970\\
\text{[12]} &\text{ Circles}&0.914&0.012&0.863&0.918&0.937\\
\text{[13]} &\text{ Heavysine}&0.964&0.002&0.955&0.964&0.972\\
\text{[14]} &\text{ Doppler}&0.461&0.146&0.129&0.428&0.731\\
\text{[15]} &\text{ 5 clouds}&0.136&0.159&0.010&0.092&0.785\\
\hline
\end{tabular}
\caption{Mean, standard deviation, minimum, median and maximum of $M=1,000$ estimates of $\eta$ for the 15 scenarii shown in Figure \ref{fig:HHGplots} ($n=500$).}
\label{tab:etahat}
\end{table}

Finally, we generated random samples of size $n=500$ and $n=5,000$ on the Peano curve (Figure \ref{fig:peano}) and on \cite{Zhang18}'s `bissection expanding cross' (Figure \ref{fig:BET}), at increasing resolution $d$ (see Section \ref{subsec:puredep}). For any finite $d$, the joint distribution of $(X_1,X_2)$ is concentrated on a curve (in the sense of (P\ref{ax:funct})) and hence exhibits perfect dependence (population Hellinger correlation $\eta=1$). However, that dependence is harder and harder to detect empirically as the resolution increases: for the same sample size, the length of the `curve' increases, hence the points are more distant to one another and it becomes difficult to capture their perfect alignment. This explains why the empirical Hellinger correlation tends to fade as $d$ increases (for $n$ fixed). Increasing the sample size pushes the empirical Hellinger correlation towards $\eta =1$ for any finite $d$, and towards $\eta = 0$ for $d=\infty$ -- as explained in Section \ref{subsec:puredep}, $X_1$ and $X_2$ are indeed independent at the limit $d=\infty$. For comparison, we have also computed the empirical distance correlation ($\widehat{\dCor}$) for each of those samples. The observed values of $\widehat{\dCor}$ remain very low across all scenarii, actually close to its value for the independent sample ($d=\infty$), even for $n=5,000$. The observed value is significantly different to 0 (at level $\alpha = 0.05$) only for resolution $d=1$, and $d=2$ for $n=5,000$ in the `bissection expanding cross' case (obtained from $200$ permutations, from \texttt{dcor.test} in the R package \texttt{energy}).  We have also computed a variety of other dependence measures or dependence test statistics (not shown), such as Hoeffding's $D$ \citep{Hoeffding48}, HSIC \citep{Pfister18}, or the HHG test statistics \citep{Heller16}, with essentially the same results: none is able to detect dependence beyond resolution $d=1$.

\begin{figure}[H]
\centering
\includegraphics[width=\textwidth]{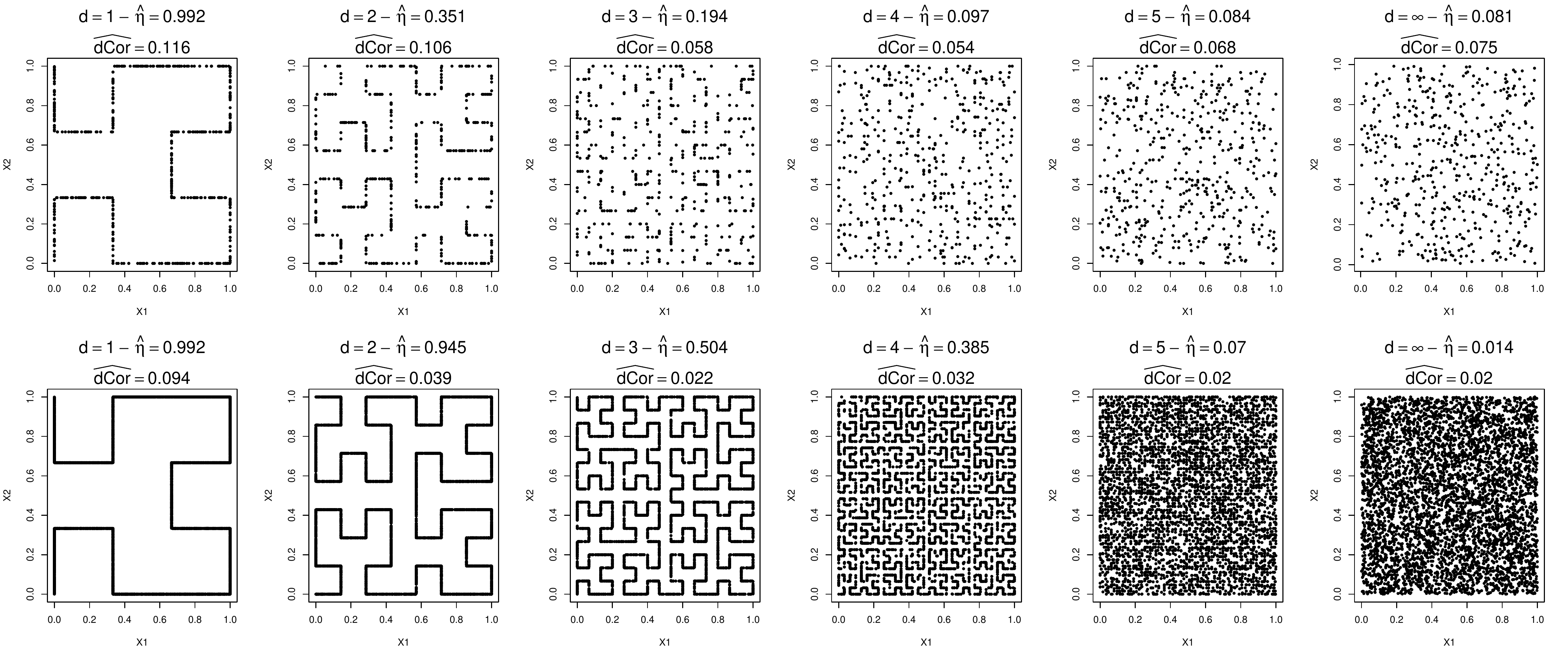}
\caption{Typical random samples of $n=500$ (top row) and $n=5000$ (bottom row) points uniformly distributed on the Peano curve at resolution $d=1,2,3,4,5$ and $d=\infty$ (independence). The empirical Hellinger correlations are shown on top, as well as the empirical distance correlations ($\widehat{\dCor}$).}
\label{fig:peano}
\end{figure}

\begin{figure}[H]
\centering
\includegraphics[width=\textwidth]{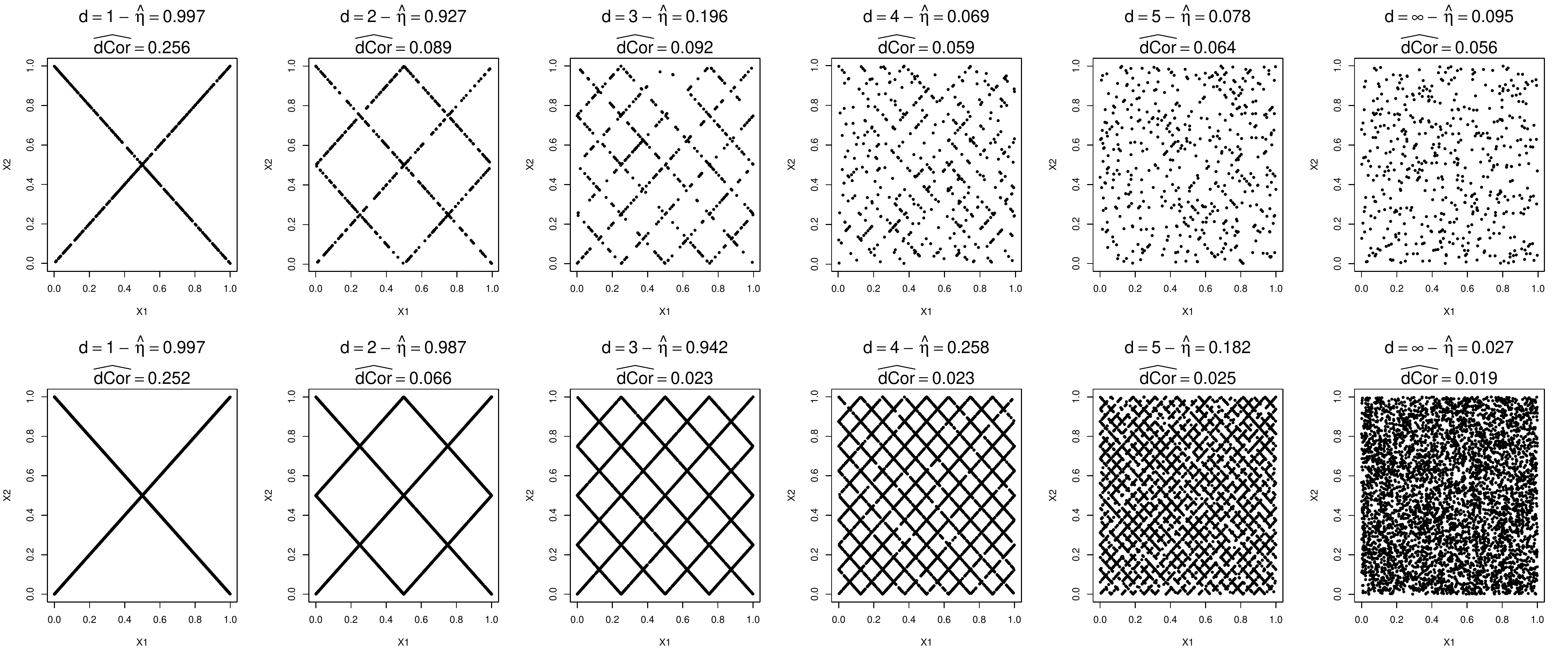}
\caption{Typical random samples of $n=500$ (top row) and $n=5000$ (bottom row) points uniformly distributed on the `bissection expanding cross' \citep{Zhang18} at resolution $d=1,2,3,4$ and $d=\infty$ (independence). The empirical Hellinger correlations are shown on top, as well as the empirical distance correlations ($\widehat{\dCor}$).}
\label{fig:BET}
\end{figure}

\subsection{Significance} \label{sec:indeptest}

The statistical significance of the empirical Hellinger correlation computed on a given data set can easily be obtained for any sample size by Monte-Carlo simulations. Indeed, if $X_1$ and $X_2$ are independent, then $C_{12}$ is bivariate uniform on $\Is^2$. One can then simulate a large number of bivariate uniform samples of any size $n$ and compute $\hat{\eta}$ on each of those, which would allow arbitrary close approximation to the exact sampling distribution of $\hat{\eta}$ for that $n$ in the case of independent variables. For instance, from $M=10,000$ independent bivariate uniform samples of size $n=500$ and $n=5,000$, we have obtained, respectively, critical values $\bar{\eta}_\text{c} = 0.146$ and $\bar{\eta}_\text{c} = 0.047$ (exact up to Monte Carlo error) at significance level $\alpha = 0.05$. All the observed empirical Hellinger correlations in Figure \ref{fig:HHGplots} are thus statistically significant, except under the `4 clouds' scenario [6] in which case $X_1$ and $X_2$ are indeed independent \citep{Heller16}. For the Peano and `bissection expanding cross' scenarii (Figures \ref{fig:peano} and \ref{fig:BET}), dependence is detected up to resolution $d=3$ for $n=500$. At that sample size, it becomes hardly possible to visually make any distinction between the samples for $d=4,5$ and $d=\infty$ (independence), indeed. For $n=5,000$, on the other hand, dependence is detected up to resolution $d=5$.

\ppn Naturally, checking the significance (i.e., non-nullity) of a measure of dependence is in many ways akin to testing for independence. Now, bearing in mind the discussion in Section \ref{sec:intro}, it seems clear that $\hat{\eta}$ might not be a test statistic leading to the most powerful test. In particular, the observations made in Sections \ref{subsec:phidep}-\ref{sec:comphi} indicate that there exist statistics taking infinite values even in the absence of strong dependence between $X_1$ and $X_2$. There is no doubt that those would produce much more powerful test procedures than $\hat{\eta}$, and the problem of testing is thus not investigated in further detail here. Yet, the previous results indicate that, in some situations, the Hellinger correlation may still show some interesting ability to highlight dependence compared to other, more classical choices of dependence measure. This is confirmed by the following application.

\section{Application: Coral fish, seabirds and reef productivity} \label{sec:app}

As an application we consider data from a recent study on coral reef productivity described in \cite{Graham18}. The density of some species of seabirds and coral reef fish around $n=12$ islands of the Chagos Archipelago (British Indian Ocean Territory) was recorded; see Table~\ref{tab:rats}. Those seabirds forage and feed in the open ocean, far from reefs, and their number around a given island largely depends on the presence or absence of rats on it. So they belong to a different ecosystem to the fish. Indeed, empirical Pearson's correlation between the densities of seabirds and fish around the 12 islands under study is $\hat{\rho} = 0.374$, not significantly different to 0 ($p=0.231$). Likewise, the empirical distance correlation \citep{Szekely07} is $\widehat{\dCor} = 0.526$, not significant ($p=0.179$ based on $200$ permutations, from \texttt{dcor.test} in the R package \texttt{energy}). 

\begin{table}[h]
	\centering
	\begin{tabular}{l|rrrrrrrrrrrr}
		{Island} & 1 & 2 & 3 & 4 & 5 & 6 & 7 & 8 & 9 & 10 & 11 & 12 \\
		\hline
		{Seabirds} & 1 & 3702 & 183 & 973 & 1161 &   2 & 1427 &   0 &   3 &  15 &   4 &  1  \\
		{Fish} &  194 & 278 & 279 & 300 & 281 & 244 & 300 & 245 & 212 & 275 & 301 & 265 \\
		\hline
	\end{tabular} \caption{Density of seabirds and fish (individuals per hectare of island, truncated to the nearest integer) around $n=12$ islands of the Chagos Archipelago.
	}  
\label{tab:rats}
\end{table}

\ppn This failure to evidence any significant dependence goes, however, against the report of \cite{Graham18}, whose main finding is that the two ecosystems are in fact connected. Nutrients, in particular nitrogen, leach from seabird guano onshore to nearshore marine systems through rainfall, among others. With this extra nutrient supply, benthic algae develop more on coral reefs adjacent to islands where seabirds are abundant, making the reef-fish communities there more abundant as well, given that fish mostly feed on those algae. 

\ppn The nitrogen input, as described in \cite{Graham18}, is thus a latent variable, positively associated to both seabird and fish densities. Now, by design (see Section \ref{subsec:puredep}), the Hellinger correlation should be particularly effective at highlighting dependence when induced by such a hidden effect. Indeed the empirical Hellinger correlation (\ref{eqn:empHellcorr}), with Legendre basis and $K$ and $L$ determined by cross-validation in (\ref{eqn:Btildestar}), is here $\hat{\eta}= 0.744$. The exact $p$-value of significance, obtained from $M=10,000$ independent bivariate uniform samples of size $n=12$, as described in Section \ref{sec:indeptest}, is found to be 0.013. So, at the typical significance level $\alpha = 0.05$, the Hellinger correlation is able to highlight the dependence between `seabirds' and `fish', even from such a small sample ($n=12$), corroborating \cite{Graham18}'s findings. 

\ppn One can also build a bootstrap confidence interval for $\eta$ by resampling from the empirical beta copula \citep{Segers17}. The conventional bootstrap (sampling the pairs with replacement from $\{\XX_1,\ldots,\XX_n \}$) is not appropriate here as the bootstrap resamples would typically include several times the same $\XX_i$. The associated $\hat{R}_i$ would then be 0, which would lead to gross underestimation of $\hat{\Bs}$ in (\ref{eqn:B12hat}). Indeed it is known that the conventional bootstrap is inconsistent in case of estimators being non-smooth functional of $F_{12}$, like estimators involving nearest-neighbor distances. On the other hand, sampling from the empirical beta copula can be regarded as a variant of the `smoothed' boostrap \citep{Kiriliouk19}. The empirical beta copula being continuous, the bootstrap resamples do not contain any repeated observation, leading to a consistent procedure. The variance of estimator (\ref{eqn:B12hat}) is not easily tractable, though \citep[p.\ 235]{Ebner18}. Hence we opt for the double boostrap procedure described in \cite{Karlsson00}. The produced two-sided bootstrap-$t$ confidence intervals are known to be second-order accurate \citep[Section 3.5]{Hall92}. For the current data set, the obtained 95\%-confidence interval for $\eta$ is $[0.67,1]$.

\section{Perspectives} \label{sec:persp}

The Hellinger correlation coefficient defined in this paper only applies to the case of univariate marginal distributions. A natural generalisation is to consider the case of multivariate marginals. Specifically, let $\XX_1 \in \R^{d_1}$ and $\XX_2 \in \R^{d_2}$, $d_1,d_2 > 1$, with continuous distributions $F_1$ and $F_2$ respectively. Let $\UU_1 = F_1(\XX_1)$ and $\UU_2=F_2(\XX_2)$ (componentwise), their copula transforms. Let $c_{12}$ be the $d_1+d_2$-dimensional copula density of $(\XX_1,\XX_2)$ on $\Is^{d_1+d_2}$ (joint density of $(\UU_1,\UU_2)$), $c_1$ be the $d_1$-dimensional copula density of $\XX_1$ on $\Is^{d_1}$ (joint density of $\UU_1$) and $c_2$ be the $d_2$-dimensional copula density of $\XX_2$ on $\Is^{d_2}$ (joint density of $\UU_2$). Then a marginal-free measure of dependence between the vectors $\XX_1$ and $\XX_2$ is the Hellinger distance between the copula $C_{12}$ and the product of the marginal copulas $C_1 C_2$, i.e., $\Hs^2 = \frac{1}{2} \iint_{\Is^{d_1+d_2}} \left(\sqrt{c}_{12}(\uu_1,\uu_2)  - \sqrt{c}_1(\uu_1) \sqrt{c}_2(\uu_2) \right)^2 \,d\uu_1 d\uu_2$. 
By Parseval identity, this is
\begin{equation}  \Hs^2 = \frac{1}{(2\pi)^{d_1+d_2}} \iint_{\R^{d_1+d_2}} \left| \varphi_{12}(\sss,\ttt) -\varphi_1(\sss) \varphi_2(\ttt)\right|^2 \,d\sss d\ttt \label{eqn:multH2} \end{equation}
where for $\sss \in \R^{d_1}$ and $\ttt \in \R^{d_2}$ and $i = \sqrt{-1}$, $\varphi_{12}(\sss,\ttt)  = \iint_{\Is^{d_1+d_2}} e^{i (\sss'\uu_1+\ttt'\uu_2)} \sqrt{c}_{12}(\uu_1,\uu_2) \,d\uu_1 d\uu_2$, $\varphi_{1}(\sss) = \iint_{\Is^{d_1}} e^{i \sss'\uu_1} \sqrt{c}_{1}(\uu_1) \,d\uu_1$ and $\varphi_{2}(\ttt) = \iint_{\Is^{d_2}} e^{i \ttt'\uu_2} \sqrt{c}_{2}(\uu_2) \,d\uu_2$,
that is, the Fourier transforms of $\sqrt{c}_{12}$, $\sqrt{c}_1$ and $\sqrt{c}_2$, respectively. Expanding the modulus in (\ref{eqn:multH2}), one obtains
\begin{equation} \Hs^2 = 1 - \frac{1}{(2\pi)^{d_1+d_2}}\Re\left\{\iint_{\mathbb{R}^{d_1+d_2}}\varphi_{12}(\sss,\ttt)\varphi^*_{1}(\sss)\varphi^*_{2}(\ttt)d\sss d\ttt\right\}, \label{eqn:multH22} \end{equation}
where $\cdot ^*$ denotes complex conjugation and $\Re(\cdot)$ is the real part. The three Fourier transforms $\varphi_{12}$, $\varphi_{1}$ and $\varphi_{2}$ are integrals with respect to the square-root of a copula density. As such, they can be estimated in a way very similar to estimating $\Bs$ by (\ref{eqn:B12hat}) or $\beta_{k\ell}$ by (\ref{eqn:tildebetakl}) in the 2-dimensional case, making use of nearest-neighbour distances in the copula-transformed domain.
Lemma 1 in \citet{Aya18} and Theorems 1 and 2 in \citet{Ebner18} would again guarantee the consistency of those estimators which, when plugged in (\ref{eqn:multH22}), would produce a consistent estimator of $\Hs^2$ for the multivariate-marginal case. This will be investigated in detail in a follow-up paper.

\ppn Another obvious question is how to adapt the Hellinger correlation to the case of discrete variables. In fact, (\ref{eqn:Hellfull}) obviously applies to the discrete case as well. If $(X_1,X_2)$ is a discrete random vector with joint probability mass function $p_{12}$ on $\Ss_1 \times \Ss_2$, where $\Ss_1$ and $\Ss_2$ are two discrete sets, and marginal distributions $p_1$ and $p_2$, respectively, then
\[ \Hs^2(X_1,X_2) = \frac{1}{2} \sum_{x_1 \in \Ss_1} \sum_{x_2 \in \Ss_2} \left(\sqrt{p}_{12}(x_1,x_2) - \sqrt{p}_1(x_1) \sqrt{p}_2(x_2) \right)^2. \]
However, this measure is not `margin-free'. Recently, \cite{Geenens19} proposed a concept of copula suitable for discrete random vectors, in which the marginals are not made uniform by Probability Integral Transform but by the Iterated Proportional Fitting procedure (IPF). One can then apply (\ref{eqn:Hellfull}) on $\overline{p}_{12}$, the `copula probability mass function', that is, the joint distribution after the marginals have been made discrete uniform by IPF, making it now margin-free. That measure will be the topic of a follow-up paper, as well as the related axiomatic. Indeed, it is clear for instance that the concept of `pure dependence' in (P6) only applies to continuous variables, and must be replaced by a different concept in the discrete case.




%


\appendix
\section*{Appendix}

\section{Families of $\phi$-divergences} \label{App:phidiv}

Common choices for $\phi$ in (\ref{eqn:phidiv}) include:
\begin{enumerate}[$(i)$]
	\item $\phi_{L,\alpha}(t) = |t-1|^\alpha$, for $\alpha \geq 1$, yielding dependence measures $D_{L,\alpha}(X_1,X_2) \doteq  \Delta_{\phi_{L,\alpha}}(F_{12}||F_1F_2)$ reminiscent of some sort of $L_\alpha$-distance between $F_{12}$ and $F_1F_2$;
	\item $\phi_{P,\alpha}(t) = (t^\alpha - \alpha t +\alpha -1)/(\alpha(\alpha-1))$, for $\alpha \in \R \backslash \{0,1\}$, yielding dependence measures $D_{P,\alpha}(X_1,X_2) \doteq \Delta_{\phi_{P,\alpha}}(F_{12}||F_1F_2)$ reminiscent of the `power divergence' of \cite{Cressie84}. As $\lim_{\alpha \to 1} \phi_{P,\alpha}(t) = t\log t -t+1$, the Mutual Information $I(X_1,X_2) = \lim_{\alpha \to 1} D_{P,\alpha}(X_1,X_2)$ is a limiting case of this power-divergence family; 
	\item $\phi_{M,\alpha}(t) = |t^\alpha -1|^{1/\alpha}$ for $\alpha \in (0,1]$, yielding dependence measures $D_{M,\alpha}(X_1,X_2) \doteq \Delta_{\phi_{M,\alpha}}(F_{12}||F_1F_2)$, reminiscent of \cite{Matusita67}'s divergence.
\end{enumerate}
Now, in case $(i)$, it is seen that for $\alpha =1$, $\bar{\varphi} =2 < \infty$, hence the rescaled measure $D^\star_{L,1}(X_1,X_2) = (1/2) D_{L,1}(X_1,X_2)$ satisfies (P\ref{ax:exist})--(P\ref{ax:GDPI}). By contrast, if $\alpha > 1$, then $\bar{\varphi} = \infty$, and the corresponding measure is unable to characterise pure dependence. That is the case of Pearson's Mean Square Contingency coefficient $\Phi^2$, which corresponds to $\alpha = 2$. In case $(ii)$, one can check that, for $\alpha < 1$ (and $\alpha \neq 0$), $\bar{\varphi} = 1/(\alpha(1-\alpha)) < \infty$. Hence the associated dependence measure satisfies (P\ref{ax:exist})--(P\ref{ax:GDPI}). For $\alpha > 1$, $\bar{\varphi} = \infty$ and one faces the same issue as above. That includes the limiting case $\alpha \to 1$, which yields the Mutual Information. Finally, in case $(iii)$, $\bar{\varphi} =2 < \infty$ for all $\alpha \in (0,1]$, making any `Matusita' dependence measure $D^\star_{M,\alpha}(X_1,X_2)=(1/2)D_{M,\alpha}(X_1,X_2)$ comply with (P\ref{ax:exist})--(P\ref{ax:GDPI}). 

All in all, among the $\phi$-measures (\ref{eqn:phidiv}) of type $(i)$-$(ii)$-$(iii)$, only $D^\star_{L,1}(X_1,X_2)$, $D^\star_{P,\alpha}$ for $\alpha <1$ (and $\alpha \neq 0$) and $D^\star_{M,\alpha}$ for $0< \alpha \leq 1$ satisfy (P\ref{ax:exist})--(P\ref{ax:GDPI}). Evidently, $D^\star_{M,1}(X_1,X_2) = D^\star_{L,1}(X_1,X_2)$. This particular measure, forming a kind of intersection between the $L$- and $M$-families, is \cite{Silvey64}'s $\Delta$ described in Section \ref{sec:comphi}. At the intersection of the $P$- and $M$-families lies $D^\star_{P,1/2}(X_1,X_2) = D^\star_{M,1/2}(X_1,X_2)$, which is $\Hs^2$ given in (\ref{eqn:Hellfull}).

\section{Marginal transformations} \label{App:margtrans}

Lemma 1 in \citet{Aya18} and Theorems 1 and 2 in \citet{Ebner18} establish the root-$n$ consistency and the asymptotic normality of the oracle estimator $\widetilde{\Bs}$. However, those results hold only if the copula density $c_{12}$ is bounded and bounded away from 0 on $\Is^2$. This is a very restrictive assumption. In particular, many common parametric copula densities would grow unbounded in some of the corners of $\Is^2$ in the presence of dependence. Now, define a double marginal transformation $\TT_i = (T_{i1},T_{i2}) = (\Xi_1^{-1}(U_{1i}),\Xi_{2}^{-1}(U_{2i}))$ where for $k=1,2$, $\Xi_k^{-1}$ is the quantile function of a continuous distribution $\Xi_k$ having a bounded density $\xi_k$ on $\Is$. Standard developments show that $\{\TT_1,\ldots,\TT_n\}$ is a sample from a distribution having density $d_{12}(t_1,t_2) = c_{12}(\Xi_1(t_1),\Xi_2(t_2))\xi_1(t_1)\xi_2(t_2)$ on $(t_1,t_2) \in \Supp(\XXi) \doteq\Supp(\Xi_1) \times \Supp(\Xi_2)$. Then we see that
\begin{equation*} \Bs =  \iint_{\Is^2} \sqrt{c_{12}}(u_1,u_2)\,du_1\,du_2  = \iint_{\Supp(\XXi)} \sqrt{d_{12}}(t_1,t_2)\sqrt{\xi_1}(t_1) \sqrt{\xi_2}(t_2)\,dt_1\,dt_2 ,
\end{equation*}
by the obvious change-of-variable. Let $S_i = \min_{j \neq i} \|\TT_j-\TT_i\|_2$. Then, similarly to $\widetilde{\Bs}$, one can also estimate $\Bs$ by
\begin{equation} \widetilde{\widetilde{\Bs}\ } = \frac{2\sqrt{n-1}}{n}\sum_{i=1}^n S_i \sqrt{\xi_1}(T_{i1}) \sqrt{\xi_2}(T_{i2}). \label{eqn:Bhat} \end{equation}
The known weight function $\sqrt{\xi_1} \sqrt{\xi_2}$ is easily accounted for in the theoretical developments; see \citet[Lemma 1]{Aya18} or \citet[Theorem 1]{Ebner18}.

The transformations $\Xi_1$ and $\Xi_2$ can be whatever is convenient. In \cite{Geenens17}, their role was essentially to send the boundaries `far away' from the observations so as to annihilate boundary effects. Here one could take, for instance, $\Xi_1$ and $\Xi_2$ to be Beta$(6,6)$-distributions: the ensuing density $d_{12}$ would remain supported on $\Is^2$, but concentrated around the center of it due to its Beta$(6,6)$-marginals. That `double Beta transformation' is also beneficial in terms of relaxing the above mentioned assumption on $c_{12}$. Indeed, let $c_{12}$ satisfy Assumption 3.3 in \cite{Geenens17}, which is rather mild and allows $c_{12}$ to grow unboundedly in some of the corners of $\Is^2$. With $\xi_k(t_k) \propto t_k^5(1-t_k)^5$ and $\Xi_k(t_k) = \int_0^{t_k} \xi_k(t^*)\,dt^*$, for $k=1,2$, that is the Beta$(6,6)$ density and cumulative distribution functions, it can be shown that $d_{12}$ is H\"older continuous (with exponent $\alpha = 2$) on $\Is^2$ -- this follows as in Lemma A.1 in \cite{Geenens17}. 

Then, Corollary 8 of \cite{Singh16} applies, and $\widetilde{\widetilde{\Bs}\ }$ is root-$n$ consistent for $\Bs$. The results shown in Figures \ref{fig:BP} and \ref{fig:HHGplots} were actually obtained making use of the Beta$(6,6)$ transformation.

\section{Cross-validation} \label{App:CV}

For nonparametric functional estimation through orthogonal series approximation, it is well-known that the truncation cutoff plays the role of smoothing parameter \cite[Chapter 2]{Efromovich99}. Hence, if one wants to estimate $\sqrt{c}_{12}$ by 
\[\widehat{\sqrt{c}}_{12}(u_1,u_2)= \sum_{k=0}^K \sum_{\ell=0}^L \hat{\beta}_{k\ell} b_k(u_1)b_\ell(u_2), \]
as suggested in Section \ref{subsec:norm}, one should select $K$ and $L$ appropriately. The usual Mean Integrated Squared Error of that estimator is
\begin{align*} \iint_{\Is^2} & \left(\widehat{\sqrt{c}}_{12}(u_1,u_2) - \sqrt{c}_{12}(u_1,u_2)\right)^2  \,du_1 du_2  \\ & = 1 + \iint_{\Is^2} \left( \widehat{\sqrt{c}}_{12}\right)^2(u_1,u_2)\,du_1 du_2 - 2 \iint_{\Is^2} \widehat{\sqrt{c}}_{12}(u_1,u_2) \sqrt{c}_{12}(u_1,u_2)\, du_1 du_2 , \end{align*}
that we may seek to minimise with respect to $K$ and $L$. We know that
\[ A^2(K,L) \doteq \iint_{\Is^2} \left( \widehat{\sqrt{c}}_{12}\right)^2(u_1,u_2)\,du_1 du_2 = \sum_{k=0}^K \sum_{\ell=0}^L \hat{\beta}^2_{k\ell}, \]
obviously an increasing function in both $K$ and $L$. On the other hand, it follows from \citet[Lemma 1]{Aya18} that, for any bounded function $f: \Is^2 \to \R$, $\iint_{\Is^2} f(u_1,u_2) \sqrt{c}_{12}(u_1,u_2)\, du_1 du_2$ can be estimated by:
\[ \frac{2\sqrt{n-1}}{n}\sum_{i=1}^n \hat{R}_i f(\hat{U}_{i1},\hat{U}_{i2}); \]
compare (\ref{eqn:tildebetakl}). This justifies to estimate $\iint_{\Is^2} \widehat{\sqrt{c}}_{12}(u_1,u_2) \sqrt{c}_{12}(u_1,u_2)\, du_1 du_2$ by 
\[ B(K,L) \doteq\frac{2\sqrt{n-1}}{n}\sum_{i=1}^n \hat{R}_i \widehat{\sqrt{c}}^{(-i)}_{12}(\hat{U}_{i1},\hat{U}_{i2}) \]
where, as usual, the Leave-one-Out version of the estimator $\widehat{\sqrt{c}}^{(-i)}_{12}$ is used for avoiding overfitting. Explicitly, this is
\[\widehat{\sqrt{c}}^{(-i)}_{12}(u_1,u_2) = \sum_{k=0}^K \sum_{\ell=0}^L \hat{\beta}^{(-i)}_{k\ell} b_k(u_1)b_\ell(u_2)\]
where
\[ \hat{\beta}^{(-i)}_{k\ell} = \frac{2 \sqrt{n-2}}{n-1} \sum_{i'\neq i} \hat{R}^{(-i)}_{i'} b_k(\hat{U}_{i'1})b_\ell(\hat{U}_{i'2})\]
and $\hat{R}^{(-i)}_{i'} = \min_{j\notin \{i,i'\}} \|\widehat{\UU}_j-\widehat{\UU}_{i'}\|_2$. This yields the explicit expression
\begin{align*}
B(K,L) & = \frac{2\sqrt{n-1}}{n}\sum_{i=1}^n \hat{R}_i \sum_{k=0}^K \sum_{\ell=0}^L \hat{\beta}^{(-i)}_{k\ell} b_k(\hat{U}_{i1})b_\ell(\hat{U}_{i2}) \\
& = \frac{4 \sqrt{n-2}}{n\sqrt{n-1}} \sum_{k=0}^K \sum_{\ell=0}^L \sum_{i=1}^n \sum_{i'\neq i} \hat{R}_i \hat{R}^{(-i)}_{i'} b_k(\hat{U}_{i'1})b_\ell(\hat{U}_{i'2}) b_k(\hat{U}_{i1})b_\ell(\hat{U}_{i2}).
\end{align*}
Finally, $K$ and $L$ may be chosen as the values which minimise $A^2(K,L) - 2 B(K,L)$, which are easy to identify given that $K$ and $L$ are integers.

%
%
%


\begin{thebibliography}{7}




\bibitem[Ali and Silvey(1965)]{Ali65} Ali, M.S.\ and Silvey, S.D.\ (1965), Association between random variables and the dispersion of a Radon-Nikodym derivative, J.\ R.\ Stat.\ Soc.\ Ser.\ B.\ Stat.\ Methodol., 27, 100-107.

\bibitem[Ali and Silvey(1966)]{Ali66} Ali, M.S.\ and Silvey, S.D.\ (1966), A general class of coefficients of divergence of one distribution from another, J.\ R.\ Stat.\ Soc.\ Ser.\ B.\ Stat.\ Methodol., 28, 131-140.


\bibitem[Aya Moreno et al(2018)]{Aya18} Aya Moreno, C., Geenens, G.\ and Penev, S.\ (2018), Shape-preserving wavelet-based multivariate density estimation, J.\ Multivariate Anal., 168, 30-47.


\bibitem[Balakrishnan and Lai(2009)]{Balakrishnan09} Balakrishnan, N.\ and Lai, C.D., Continuous Bivariate Distributions, Springer, 2009.

\bibitem[Beare(2010)]{Beare10} Beare, B.\ (2010), Copulas and temporal dependence, Econometrica 78, 395-410.

\bibitem[Bell(1962)]{Bell62} Bell, C.B.\ (1962), Mutual information and maximal correlation as measures of dependence, Ann.\ Math.\ Statist., 33, 587-595.





\bibitem[Bergsma and Dassios(2014)]{Bergsma14} Bergsma, W.\ and Dassios, A.\ (2014), A consistent test of independence based on a sign covariance related to Kendall's tau, Bernoulli, 20, 1006-1028.

\bibitem[Berrett et al(2018)]{Berrett18} Berrett, T.B., Samworth, R.J.\ and Yuan, M.\ (2018), Efficient multivariate entropy estimation via $k$-nearest neighbour distances, Ann.\ Statist., 47, 1, 288-318.

\bibitem[Berrett and Samworth(2019)]{Berrett19} Berrett, T.B.\ and Samworth, R.J.\ (2019), Nonparametric independence testing via mutual information, Biometrika, {\it to appear}.

\bibitem[Bhattacharyya(1943)]{Bhattacharyya43} Bhattacharyya, A.\ (1943), On a measure of divergence between two statistical populations defined by their probability distributions, Bull.\ Calcutta Math.\ Soc., 35, 99-109.










\bibitem[Cover and Thomas(2006)]{Cover06} Cover, T.M.\ and Thomas, J.A., Elements of Information Theory, 2nd Edition, Wiley, New York, 2006.

\bibitem[Cressie and Reid(1984)]{Cressie84} Cressie, N.\ and Reid, T.R.C.\ (1984), Multinomial goodness-of-fit tests, J.\ R.\ Stat.\ Soc.\ Ser.\ B.\ Stat.\ Methodol., 46, 440-464.




\bibitem[Deheuvels(2009)]{Deheuvels09} Deheuvels, P.\ (2009), A multivariate Bahadur-Kiefer representation for the empirical copula process, J.\ Math.\ Sci.\ (N.Y.), 163, 382-398.




\bibitem[Dette et al(2013)]{Dette13} Dette, H., Siburg, K.F.\ and Stoimenov, P.A.\ (2013), A Copula-Based Non‐parametric Measure of Regression Dependence,  Scand.\ J.\ Statist., 40, 21-41.

\bibitem[Ding et al(2017)]{Ding17} Ding, A.A., Dy, J.G., Li, Y.\ and Chang, Y.\ (2017), A robust-equitable measure for feature ranking and selection, J.\ Mach.\ Learn.\ Res., 18, 1-46.


\bibitem[Durante et al(2013)]{Durante13} Durante, F., Fern\'andez-S\'anchez, J.\ and Sempi, C.\ (2013), A note of notion of singular copula, Fuzzy Sets and Systems, 211, 120-122.

\bibitem[Ebner et al(2018)]{Ebner18} Ebner, B., Henze, N.\ and Yukich, J.E.\ (2018), Multivariate goodness-of-fit on flat and curved spaces via nearest neighbor distances, J.\ Multivariate Anal., 165, 231-242.

\bibitem[Efromovich(1999)]{Efromovich99} Efromovich, S., Nonparametric curve estimation: methods, theory and applications, Springer-Verlag, New-York, 1999.











\bibitem[Geenens et al(2017)]{Geenens17} Geenens, G., Charpentier, A.\ and Paindaveine, D.\ (2017), Probit transformation for nonparametric kernel estimation of the copula density, Bernoulli, 23, 1848-1873.

\bibitem[Geenens(2019)]{Geenens19} Geenens, G.\ (2019), An essay on copula modelling for discrete random vectors; or how to pour new wine into old bottles, {\it Manuscript}, arXiv:1901.08741.



%
%

\bibitem[Genest and Neslehov\'a(2007)]{Genest07} Genest, C.\ and Neslehov\'a, J.\ (2007), A primer on copulas for count data, ASTIN Bull., 37, 475-515.

%
%
%
%





\bibitem[Graham et al(2018)]{Graham18} Graham, N.A.J., Wilson, S.K., Carr, P., Hoey, A.S., Jennings, S.\ and MacNeil, M.A.\ (2018), Seabirds enhance coral reef productivity and functioning in the absence of invasive rats, Nature, 559, 250-253.

\bibitem[Granger et al(2004)]{Granger04} Granger, C.W., Maasoumi, E.\ and Racine, J.\ (2004), A dependence metric for possibly nonlinear processes, J.\ Time Series Anal., 25, 649-669.

\bibitem[Gretton et al(2005)]{Gretton05} Gretton, A., Bosquet, O., Smola, A.\ and Sch\"olkopf, B.\ (2012), Measuring statistical dependence with Hilbert-Schmidt norms, {\it In:} Algorithmic Learning Theory, Jain, S., Simon, H.U.\ and Tomita, E.\ Eds, Springer, pp. 63-77.




\bibitem[Hall(1992)]{Hall92} Hall, P.\ (1992), The boostrap and Edgeworth expansions, Springer Series in Statistics, Springer-Verlag, New-York.



\bibitem[Heller et al(2016)]{Heller16} Heller, R., Heller, Y., Kaufman, S., Brill, B.\ and Gorfine, M.\ (2016), Consistent distribution-free $K$ sample and independence tests for univariate random variables, J.\ Mach.\ Learn.\ Res., 17, 1-54.


\bibitem[Hoeffding(1942)]{Hoeffding42} Hoeffding, W.\ (1942), Stochastic dependence and functional relationships, Skandinavisk Aktuarietidskrift, 25, 200-227; Recovered from: The Collected Works of Wassily Hoeffding, Eds: Fisher, N.I.\ and Sen, P.K., Springer Series in Statistics, Springer Verlag, 1994, pp.\,135-155.  

\bibitem[Hoeffding(1948)]{Hoeffding48} Hoeffding, W.\ (1948), A nonparametric test of independence, Ann.\ Math.\ Statist., 19, 546-557.

\bibitem[Hong and White(2005)]{Hong05} Hong, Y.\ and White, H.\ (2005), Asymptotic distribution theory for nonparametric entropy measures of serial dependence, Econometrica, 73, 837-901.


\bibitem[Janzing et al(2013)]{Janzing13} Janzing, D., Balduzzi, D., Grosse-Wentrup, M.\ and Sch\"olkopf, B.\ (2013), Quantifying causal influences, Ann.\ Statist., 41, 2324-2358.







\bibitem[Karlsson and L\"othgren(2000)]{Karlsson00} Karlsson, S.\ and L\"othgren, M.\ (2000), Computationally efficient double bootstrap variance estimation, Comput.\ Statist.\ Data Anal., 33, 237-247.


\bibitem[Kimeldorf and Sampson(1978)]{Kimeldorf78} Kimeldorf, G.\ and Sampson, A.R.\ (1978), Monotone dependence, Ann.\ Statist., 6, 895-903.

\bibitem[Kinney and Atwal(2014)]{Kinney14} Kinney, J.\ and Atwal, G.\ (2014), Equitability, mutual information, and maximal information coefficient, Proc.\ Natl.\ Acad.\ Sci.\ USA, 111, 3354-3359.

\bibitem[Kiriliouk et al(2019)]{Kiriliouk19} Kiriliouk, A., Segers, J.\ and Tsukahara, H.\ (2019), On some resampling procedures with the empirical beta copula, {\it Manuscript}, arXiv:1905.12466.



\bibitem[Kozachenko and Leonenko(1987)]{Kozachenko87} Kozachenko, L.\ and Leonenko, N.\ (1987), On statistical estimation of entropy of a random vector, Problems Inform.\ Transmission, 23, 95-101.



\bibitem[Lancaster(1963)]{Lancaster63} Lancaster, H.O.\ (1963), Correlation and complete dependence of random variables, Ann.\ Math.\ Statist., 34, 1315-1321.

\bibitem[Lancaster(1982)]{Lancaster82} Lancaster, H.O.\ (1982), Measures and indices of dependence, {\it In:} Kotz, S.\ and Johnson, N.L.\ (eds), Encyclopedia of Statistical Sciences, Vol.\ 2, Wiley, New York, pp.\ 334-339.

\bibitem[Leonenko et al(2008)]{Leonenko08} Leonenko, N., Pronzato, L.\ and Savani, V.\ (2008), A class of R\'enyi information estimators for multidimensional densities, Ann.\ Statist., 36, 2153-2182.


\bibitem[Liese and Vajda(2006)]{Liese06} Liese, F.\ and Vajda, I.\ (2006), On divergences and informations in statistics and information theory, IEEE Trans.\ Inform.\ Theory, 52, 4394-4412.

\bibitem[Liiti\"ainen et al(2010)]{Liitiainen10} Liiti\"ainen, E., Lendasse, A.\ and Corona, F.\ (2010), A boundary corrected expansion of the moments of nearest neighbor distributions, Random Struct.\ Algorithms, 37, 223-247.

\bibitem[Linfoot(1957)]{Linfoot57} Linfoot, E.H.\ (1967), An informational measure of correlation, Information and Control, 1, 85-99.



\bibitem[Matusita(1967)]{Matusita67} Matusita, K.\ (1967), On the notion of affinity of several distributions and some of its applications, Ann.\ Inst.\ Statist.\ Math., 19, 181-192.


\bibitem[Micheas and Zografos(2006)]{Micheas06} Micheas, A.C.\ and Zografos, K.\ (2006), Measuring stochastic dependence using $\varphi$-divergence, J.\ Multivariate Anal., 97, 765-784.

\bibitem[Mikus\'inski et al(1992)]{Mikusinski92} Mikus\'inski, P., Sherwood, H.\ and Taylor, M.D.\ (1992), Shuffles of Min, Stochastica, 13, 61-74.




\bibitem[Pfister et al(2018)]{Pfister18} Pfister, N., B\"uhlmann, P., Sch\"olkopf, B.\ and Peters, J.\ (2018), Kernel-based tests for joint independence, J.\ R.\ Stat.\ Soc.\ B, 80, 5-31.


\bibitem[P\'oczos et al(2012)]{Poczos12} P\'oczos, B., Ghahramani, Z.\ and Schneider, J.\ (2012), Copula-based kernel dependency measures, {\it In:} Proceedings of the 29th International Conference on Machine Learning (ICML-12), pp.\ 775-782.

\bibitem[Pyke(1965)]{Pyke65} Pyke, R.\ (1965), Spacings, J.\ R.\ Stat.\ Soc.\ B, 27, 395-449.





\bibitem[Reimherr and Nicolae(2013)]{Reimherr13} Reimherr, M.\ and Nicolae, D.L.\ (2013), On quantifying dependence: a framework for developing interpretable measures, Statist.\ Sci., 28, 116-130.

\bibitem[R\'enyi(1959)]{Renyi59} R\'enyi, A.\ (1959), On measures of dependence, Acta.\ Math.\ Acad.\ Sci.\ Hungar., 10, 441-451.


\bibitem[Reshef et al(2011)]{Reshef11} Reshef, D., Reshef, Y., Finucane, H., Grossman, S., McVean, G., Turnbaugh, P., Lander, E., Mitzenmacher, M.\ and Sabeti, P.\ (2011), Detecting novel associations in large data sets, Science, 334, 1518-1524.

\bibitem[Reshef et al(2016)]{Reshef16} Reshef, Y., Reshef, D., Finucane, H., Sabeti, P.\ and Mitzenmacher, M.\ (2016), Measuring dependence powerfully and equitably, J.\ Mach.\ Learn.\ Res., 17, 1-63.


\bibitem[Rosenblatt(1975)]{Rosenblatt75} Rosenblatt, M.\ (1975), A quadratic measure of deviation of two-dimensional density estimates and a test of independence, Ann.\ Statist., 3, 1-14.






\bibitem[Schweizer and Wolff(1981)]{Schweizer81} Schweizer, B.\ and Wolff, E.\ (1981), On nonparametric measures of dependence for random variables, Ann.\ Statist., 9, 879-885.


\bibitem[Segers et al(2017)]{Segers17} Segers, J., Sibuya, M.\ and Tsukahara, H.\ (2017), The empirical beta copula, J.\ Multivariate Anal., 155, 35-51.

\bibitem[Sejdinovic et al(2013)]{Sejdinovic13} Sejdinovic, D., Sriperumbudur, B., Gretton, A.\ and Fukumizu, K.\ (2013), Equivalence of distance-based and RKHS-based statistics in hypothesis testing, Ann.\ Statist., 41, 2263-2291.

\bibitem[Silvey(1964)]{Silvey64} Silvey, S.D.\ (1964), On a measure of association, Ann.\ Math.\ Statist., 35, 1157-1166.

\bibitem[Singh and P\'oczos(2016)]{Singh16} Singh, S.\ and P\'oczos, B.\ (2016), Finite-sample analysis of fixed-$k$ nearest neighbor density functional estimators, {\it In:} Advances in Neural Information Processing Systems, pp.\ 1217-1225.


\bibitem[Skaug and Tj{\o}stheim(1996)]{Skaug96} Skaug, H.\ and Tj{\o}stheim, D.\ (1996), Testing for serial independence using measures of distances between densities, {\it In:} Athens Conference on Applied Probability and Time Series, P.\ Robinson and M.\ Rosenblatt Eds., Springer Lecture Notes in Statistics.


\bibitem[Steele(1997)]{Steele97} Steele, J.M., Probability theory and combinatorial optimization, Vol.\ 69, Siam, 1997.

\bibitem[Su and White(2008)]{Su08} Su, L.\ and White, H.\ (2008), A nonparametric Hellinger metric test for conditional independence, Econometric Theory, 24, 829-864.

\bibitem[Sun and Zhao(2014)]{Sun14} Sun, N. and Zhao, H.\ (2014), Putting things in order, Proc.\ Natl.\ Acad.\ Sci.\ USA, 111, 16236-16237.
  
\bibitem[Sz\'ekely et al(2007)]{Szekely07} Sz\'ekely, G., Rizzo, M.\ and Bakirov, N.\ (2007), Measuring and testing dependence by correlation of distances, Ann.\ Statist., 35, 2769-2794.

\bibitem[Sz\'ekely and Rizzo(2009)]{Szekely09} Sz\'ekely, G.\ and Rizzo, M.\ (2009), Brownian distance covariance, Ann.\ Appl.\ Stat., 3, 1236-1265.



%
%







\bibitem[Wang et al(2017)]{Wang17} Wang, X., Jiang, B.\ and Liu, J.S.\ (2017), Generalized R-squared for detecting dependence, Biometrika, 104, 129-139.

\bibitem[Weihs et al(2018)]{Weihs18} Weihs, L., Drton, M.\ and Meinshausen, N.\ (2018), Symmetric rank covariances: a generalized framework for nonparametric measures of dependence, Biometrika, 105, 547-562.



\bibitem[Zeng et al(2018)]{Zeng18} Zeng, X., Xia, Y.\ and Tong, H.\ (2018), Jackknife approach to the estimation of mutual information, Proc.\ Natl.\ Acad.\ Sci.\ USA, 115, 9956-9961.

\bibitem[Zhang(2019)]{Zhang18} Zhang, K.\ (2019), BET on independence, J. Amer. Statist. Assoc., {\it to appear}.


\end{thebibliography}
\end{document}